\newtheorem{theorem}{Theorem}
\newtheorem{proposition}[theorem]{Proposition}
\newtheorem{lemma}[theorem]{Lemma}
\theoremstyle{definition}
\newtheorem{definition}[theorem]{Definition}
\newtheorem{remark}[theorem]{Remark}
\def\d{\partial}
\def\f{\frac}
\def\A{\mathcal{A}}
\newcommand*{\pd}
[2]{\mathchoice{\frac{\partial#1}{\partial#2}}
  {\partial#1/\partial#2}{\partial#1/\partial#2}
  {\partial#1/\partial#2}}
\newcommand*{\fd}
[2]{\mathchoice{\frac{\delta#1}{\delta#2}}
  {\delta #1/\delta#2}{\delta#1/\delta#2}{\delta#1/\delta#2}}
\newcommand{\ddx}[1]{\partial_x^{#1}}
\begin{document}

\title{Three computational approaches \\
  to weakly nonlocal Poisson brackets} \author{M. Casati$^1$, P. Lorenzoni$^2$,
  R. Vitolo$^3$} \date{}
\maketitle
\vspace{-7mm}
\begin{center}
  $^1$School of Mathematics, Statistics and Actuarial Science
  \\
  University of Kent, Canterbury, UK
  \\
  email: \texttt{M.Casati@kent.ac.uk}
  \\
  $^2$Dipartimento di Matematica
  \\
  Universit\`a di Milano-Bicocca, Milano, Italy
  \\
  email: \texttt{paolo.lorenzoni@unimib.it}
  \\
  $^3$Dipartimento di Matematica e Fisica ``E. De Giorgi'',
  \\
  Universit\`a del Salento and Sezione INFN di Lecce
  \\
  via per Arnesano, 73100 Lecce, Italy
  \\
  email: \texttt{raffaele.vitolo@unisalento.it}
\end{center}

\begin{abstract}
  We compare three different ways of checking the Jacobi identity for weakly
  nonlocal Poisson brackets using the theory of distributions, of
  pseudodifferential operators and of Poisson vertex algebras, respectively. We
  show that the three approaches lead to similar computations and same results.
\end{abstract}

\tableofcontents

\section{Introduction}

Poisson brackets play a fundamental role in Mathematical Physics. Many partial
differential equations (PDEs) can be endowed by Poisson brackets by means of a
Hamiltonian operator; the prototype of such equations is the Korteweg--de Vries
(KdV) equation, that was shown to be a completely integrable Hamiltonian system
in \cite{ZF71,G71}. More precisely, an autonomous system of PDEs in
evolutionary form
\begin{equation}
\label{eq:4}
F = u^i_t - f^i(u^j,u^j_\sigma) = 0
\end{equation}
in two independent variables $t$, $x$ and in $n$ dependent variables $u^1$,
\dots, $u^n$, where $u^j_\sigma$ are $\sigma$-times $x$-derivatives of $u^j$,
is said to be \emph{Hamiltonian} if it can be written as
\begin{equation}
\label{eq:6}
u^i_t = f^i(u^j,u^j_\sigma) = A^{ij}\fd{H}{u^i},
\end{equation}
where $\fd{}{u^i}$ are the variational derivatives,
$H=\int h(u^j,u^j_\sigma)\,dx$ is the Hamiltonian functional and
$A^{ij}=A^{ij\sigma}\partial_\sigma$ is a matrix partial differential operator
in total derivatives $\partial_\sigma=\partial_x\circ\cdots\circ\partial_x$
($\sigma$-times). The operator $A$ is required to define a Poisson bracket on
the space of local functionals as follows. Given two local functionals
$F=\int f\,dx$ and $G=\int g\,dx$ where $f=f(u^j,u^j_\sigma)$ and similarly for
$g$, the operation
\begin{equation}
\label{eq:7}
\{F,G\}_P = \int P^{ij\sigma}\fd{f}{u^i}\partial_\sigma\fd{g}{u^j}\,dx
\end{equation}
is required to be a Poisson bracket. This requirement implies strong
differential constraints on the coefficients
$P^{ij\sigma}= P^{ij\sigma}(u^k,u^k_\sigma)$ coming from the skew-symmetry and
the Jacobi identity for $\{\,,\,\}_P$, as we will see later on.

It was soon realized that Poisson brackets for integrable systems should be
extended to include nonlocal terms (or pseudo-differential operators) in the
definition: one of the first examples is the Poisson bracket defined by the
operator $u_x\partial_x^{-1}u_x$ for the Krichever--Novikov equation appeared
in \cite{Sok84}. A general class of nonlocal Poisson brackets was introduced
and studied by Ferapontov and Mokhov in \cite{FM90} and further generalized in
\cite{Fer91,Fer91b}. This class is made of operators $P$ of the form
\begin{equation}
\label{eq:8}
P^{ij} = g^{ij}\partial_x + \Gamma^{ij}_ku^k_x +
c^\alpha w^i_{\alpha k}u^k_x \partial_x^{-1} w^j_{\alpha h}u^h_x,
\end{equation}
where all coefficients are functions of the field variables $(u^i)$ and $g$ is
assumed to be non degenerate.  The operator $P$ is homogeneous with respect to
$x$-derivation in the sense of Dubrovin and Novikov \cite{DN}, and generalizes
the local homogeneous Poisson brackets defined therein. Purely nonlocal
Hamiltonian operators generalizing the nonlocal structure of Krichever-Novikov
equation have been studied in \cite{M,GLR}.

The above class was further enlarged in \cite{MN01} to Poisson brackets defined
by operators of the form\footnote{We will use the form of the nonlocal part
	that is called ``reduced'' in \cite{MN01}.}
\begin{equation}
\label{eq:10}
P^{ij} = B^{ij\sigma}\partial_\sigma  +
e^{\alpha}w^i_{\alpha} \partial_x^{-1} w^j_{\alpha},
\end{equation}
where the coefficients $B^{ij\sigma}$ and $w^i_\alpha$, $w^j_\beta$ can depend
on all field variables $u^k$ and a finite number of derivatives $u^k_\sigma$,
and $e^{\alpha}$ are constants. The operator
$\partial_x^{-1}$ is defined to be
\begin{equation}
\label{eq:11}
\partial_x^{-1} = \frac{1}{2}\int_{-\infty}^x\,dx -
\frac{1}{2}\int^{+\infty}_x\,dx
\end{equation}
acting on a suitable space of rapidly decreasing vector functions; the operator
is formally skew-adjoint. Poisson brackets defined by operators of the type of
$J$ are said to be \emph{weakly non-local} \cite{MN01}; Poisson brackets of the
subclass of $P$ in \eqref{eq:8} are said to be weakly non-local Poisson
brackets of hydrodynamic type (PBHT).

Weakly non-local Poisson brackets play an important role in the theory of
integrable systems. It was proved in \cite{MN01} that the higher Hamiltonian
operators for the KdV equation generated by composition with the Recursion
operator (known since the late '70) are all weakly non-local. It was
conjectured that every hydrodynamic system of PDEs
\begin{equation}
\label{eq:9}
u^i_t = V^i_j(u^k) u^j_x
\end{equation}
is Hamiltonian with respect to a suitable weakly nonlocal Poisson bracket of
hydrodynamic type. As the dispersionless limit of a large class of evolutionary
systems of PDEs consists in a system of the type of~\eqref{eq:9}, it is natural
to expect that weakly nonlocal Poisson brackets, and in particular those that
can be obtained as a deformation of a weakly nonlocal Poisson bracket of
hydrodynamic type, are of utmost interest.

Despite their importance, the use of nonlocal Poisson brackets is quite
limited, especially if compared to local homogeneous Poisson brackets of
hydrodynamic type. The most important reason for that is related to the much
higher computational difficulties with respect to the local case.
Indeed, the very first moment when such difficulties are met
is the verification of the Jacobi identity for the coefficients of a candidate
differential operator $P$.

More in detail, the problem is to reach a divergence-free form of the Jacobi
expression $\{\{F,G\}_P,H\}_P + \text{cyclic}$ when there are nonlocal terms,
in order to check (or impose) the vanishing of a necessary and sufficient set
of differential expressions of the coefficients of $P$.

In the literature we can find (at least) three approaches to the Hamiltonian
formalism for PDEs:
\begin{enumerate}
\item the approach with distributions, quite spread in Theoretical Physics
  \cite{DN,DN2,DZ};
\item the approach with differential operators, more common between pure
  mathematicians \cite{Many,Dorf,Olv93};
\item a new algebraic approach based on Poisson Vertex Algebras, introduced in
  \cite{BDSK09} for local Poisson brackets and later extended to nonlocal
  Poisson brackets in \cite{DSK13}.
\end{enumerate}
The third approach relies on an algebraic formulation in order to give a
mathematically well-grounded basis to the theory of nonlocal Poisson
brackets. In particular, a canonical form of pseudo-differential operators is
achieved by the systematic use of formal series (see \cite{Olv93}), which, on
the other hand, lay outside the more traditional differential-geometric
picture (see the Conclusions for further remarks).

The aim of this work is to illustrate an \emph{algorithmic procedure to compute
	the Jacobi identity for weakly nonlocal Poisson brackets} in the three
formalisms above.

In the case of distributions, the algorithm has been introduced in~\cite{L} in
order to study the bi-Hamiltonian structure of the Liouville and sine-Gordon
PDEs. In the case of differential operators, the algorithm is shown here for
the first time thanks to the explicit correspondence between the languages of
distributions and differential operators. In the case of Poisson Vertex
Algebra, the algorithm has been described in \cite{DSK13} (a nontrivial example
can be found in~\cite{CFPV}).

In all cases, the algorithm consists in the reduction of the Jacobi identity to
a reduced form, in such a way to make the identity effectively computable. This
is practically achieved by means of identities between distributions, or
integration by parts, or algebraic manipulations. In all cases the interesting
fact is that the computations lead to the same conditions in the end, with a
precise correspondence that is explicitly written as a `dictionary' between the
three formalisms.

In order to illustrate the algorithm and the correspondence between the
different formalisms we will consider the case of weakly nonlocal Poisson
brackets of hydrodynamic type. We will obtain the following conditions on the
coefficients in $P^{ij}$ (assuming $\det g^{ij}\neq 0$), namely
\begin{subequations}\label{eq:61}
	\begin{gather}
	\label{eq:52}
	g^{ij} = g^{ji},
	\\
	\label{eq:56} g^{ij}_{,k} =
	\Gamma^{ij}_k+\Gamma^{ji}_k,
	\\
	\label{eq:57} g^{is}\Gamma^{jk}_s = g^{js}\Gamma^{ik}_s,
	\\
	\label{eq:58} g^{is}w^j_s = g^{js}w^i_s
	\\
	\label{eq:59} \nabla_i w^j_k = \nabla_k w^j_i,
	\\ \label{eq:60} R^{ij}_{kh} = w^i_kw^j_h - w^j_kw^i_h.
	\end{gather}
\end{subequations}
where $\nabla$ is the linear connection with Christoffel symbols
$\Gamma^k_{ij}=-g_{il}\Gamma^{lk}_j$ and $R^{ij}_{kh}=g^{is}R^j_{skh}$ is the
Riemannian curvature. The above conditions, first obtained in \cite{Fer91} (for
details of computations see \cite{Ont}), admit the following interpretation:
the first three equations tell us that the functions $g^{ij}$ can be
interpreted as the contravariant components of a (pseudo)-euclidean metric and
$\Gamma^{ij}_k$ as Christoffel symbols of the corresponding Levi-Civita
connection, while the remaining equations coincide with the classical
Gauss-Peterson-Mainardi-Codazzi equations for a hypersurface in a
(pseudo)-Euclidean space. The (pseudo)-metric $g$ and the affinor $w$ can be
identified with the induced metric and the Weingarten operator respectively.

\bigskip

The paper is organized as follows. In Sections $2,3$ and $4$ we explain the
algorithm to check Jacobi identity in the three formalisms and we write a sort
of dictionary between the three approaches.  The remaining sections are devoted
to illustrate the algorithm in the case of weakly nonlocal Poisson brackets of
hydrodynamic type.  In order to simplify the computations we consider the
subcase where the nonlocal tail contains a single term but the computations can
be performed in the same way in the general case.

\paragraph{Acknowledgments.} We thank E. Ferapontov, J. Krasil'shchik,
M. Pavlov, D. Valeri, A. Verbovetsky, Y. Zhang for useful discussions. M.~C. is
supported by EPSRC grant EP/P012698/1. P.L. is supported by MIUR - FFABR funds
2017 and by research funds of the University of Milano - Bicocca. P.L is
grateful to the Department of Mathematics and Physics ``E. De Giorgi'' of the
Universit\`a del Salento for the kind hospitality and for supporting his visit.
R.V. acknowledges the support of the Department of Mathematics and Physics
``E. De Giorgi'' of the Universit\`a del Salento and of the Istituto Nazionale
di Fisica Nucleare, Sezione di Lecce, IS-CSN4 Mathematical Methods of Nonlinear
Physics.  Finally, we thank GNFM for supporting activities that contributed to
the research reported in this paper.

\section{Jacobi identity and distributions}
\label{sec:jacobi-ident-distr}

Here we briefly introduce weakly nonlocal Poisson brackets as distributions and
describe the algorithm for bringing the Jacobi identity to a reduced canonical
form.

\subsection{The Jacobi identity}
\label{sec:jacobi-identity}

Following \cite{MN01}, we consider weakly nonlocal Poisson brackets of the form
\begin{multline}\label{eq:55}
  \{u^i(x),u^j(y)\}_P=
  \sum_{k\ge 0}B^{ij}_k(u^h,u^h_\sigma)\delta^{(k)}(x-y)
  \\
  + e^\alpha w^i_{\alpha}(u^k,u^k_\sigma)
  \nu(x-y)w^j_{\alpha}(u^k,u^k_\sigma)
\end{multline} 
where $\nu(x-y)=\f{1}{2}\operatorname{sgn}(x-y)$.

The Jacobi identity
\begin{multline}\label{jacobi}
  \{u^i(x),u^j(y)\}_P,u^k(z)\}_P+\{u^k(z),u^i(x)\}_P,u^j(y)\}_P
  \\ +\{u^j(y),u^k(z)\}_P,u^i(x)\}_P=0
\end{multline}
can be written as \cite{DZ}
\begin{multline}
  \label{SBDZ}
J^{ijk}_{xyz}=
\frac{\partial P^{ij}_{x,y}}{\partial u^{l}_\sigma(x)}
\partial_x^\sigma P^{lk}_{x,z}
+ \frac{\partial P^{ij}_{x,y}}{\partial u^{l}_\sigma(y)}
\partial_y^\sigma P^{lk}_{y,z}
+\frac{\partial P^{ki}_{z,x}}{\partial u^{l}_\sigma(z)} \partial_z^\sigma
P^{lj}_{z,y}+
\\
\frac{\partial P^{ki}_{z,x}}{\partial u^{l}_\sigma(x)}
\partial_x^\sigma P^{lj}_{x,y}
+\frac{\partial P^{jk}_{y,z}}{\partial u^{l}_\sigma(y)}
\partial_y^\sigma P^{li}_{y,x}
+ \frac{\partial P^{jk}_{y,z}}{\partial u^{l}_\sigma(z)}
\partial_z^\sigma P^{li}_{z,x}=0,
\end{multline}
where $P^{ij}_{x,y}=\{u^i(x),u^j(y)\}_P$. The vanishing of the distribution
$J^{ijk}_{xyz}$ means that for any choice of the test functions
$p_i(x),q_j(y),r_k(z)$ the triple integral
\begin{equation}
  \iiint  J^{ijk}_{xyz} p_i(x)q_j(y)r_k(z)\,dxdydz
\end{equation}
should vanish.

\subsection{The algorithm}
\label{sec:algorithm}

Following \cite{L}, we present a procedure to collect together all terms which
are related by a distributional identity. We call the result of this procedure
the \emph{reduced form} of the Jacobi identity.
\begin{enumerate}
\item Using the identity
\begin{equation}\label{eq:3}
\nu(z-y)\delta(z-x)=\nu(x-y)\delta(x-z)
\end{equation}
and its two obvious analogues obtained by a cyclic permutation of the
variables, together with their differential consequences, we can eliminate all
terms containing $\nu(z-y)\delta^{(n)}(z-x)$, $\nu(y-x)\delta^{(n)}(y-z)$,
$\nu(x-z)\delta^{(n)}(x-y)$ producing nonlocal terms containing
$\nu(x-y)\delta^{(n)}(x-z)$, $\nu(z-x)\delta^{(n)}(z-y)$,
$\nu(y-z)\delta^{(n)}(y-x)$ and additional local terms.
\item  Using the identity
\begin{equation}
\label{id1}
f(z)\delta^{(n)}(x-z)=\sum_{k=0}^n\binom{n}{k}f^{(n-k)}(x)\delta^{(n-k)}(x-z),
\end{equation}
(and its cyclic permutations)
we can eliminate the dependence on $z$ in the coefficients of the terms
containing $\nu(x-y)\delta^{(n)}(x-z)$, the dependence on $y$ in the
coefficients of the terms containing $\nu(z-x)\delta^{(n)}(z-y)$ and the
dependence on $x$ in the coefficients of the terms containing
$\nu(y-z)\delta^{(n)}(y-x)$.  After the first two steps the nonlocal part of
$J^{ijk}_{xyz}$ has the form
\begin{multline}\label{eq:13}
a_1(x,y,z)\nu(x-y)\nu(x-z) + \text{cyclic}(x,y,z)\\
+ \sum_{n\ge 0}b_n(x,y)\nu(x-y)\delta^{(n)}(x-z)+\text{cyclic}(x,y,z).
\end{multline} 
\item The local part of $J^{ijk}_{xyz}$ (which contain also some additional
  terms coming from the nonlocal part) can be treated as usual and reduced to
  the form
\begin{equation}\label{eq:15}
  \sum_{m,n}e_{mn}(x)\delta^{(m)}(x-y)\delta^{(n)}(x-z)
\end{equation} 
using the identities (and their differential consequences)
\begin{equation}
  \delta(z-x)\delta(z-y)=\delta(y-x)\delta(y-z)=\delta(x-y)\delta(x-z)
  \label{eq:14}
\end{equation}

and the identities~\eqref{id1}.
\end{enumerate}
It is easy to check that no further simplifications are possible. We will see
later that the fulfillment of the Jacobi identity turns out to be equivalent to
the vanishing of each coefficient in the reduced form.

\section{Jacobi identity and differential operators}
\label{sec:jacobi-ident-diff}

\subsection{The Jacobi identity}
\label{sec:jacobi-identity-1}

The conditions under which the bracket~\eqref{eq:7} is a Poisson bracket can be
written as requirements on the differential operator $P$~\eqref{eq:10}. We
recall that the operator $P$ is a variational bivector \cite{Many,Dorf,Olv93},
hence it is defined up to total divergencies. We consider Poisson brackets
defined by differential operators of the form
$P^{ij} = B^{ij\sigma}\partial_\sigma  +
e^{\alpha}w^i_{\alpha} \partial_x^{-1} w^j_{\alpha}$~\eqref{eq:10}.
Then, it is well-known that
\begin{itemize}
\item the skew-symmetry of $\{,\}_J$ is equivalent to the formal
  skew-adjointness of $P$, $P^*= - P$;
\item the Jacobi identity for $\{,\}_P$ is equivalent to the vanishing of the
  \emph{Schouten bracket} $[P,P]=0$.
\end{itemize}
Note that the Schouten bracket of two variational bivectors is a variational
three-vector, \emph{i.e.}, it is a skew-symmetric differential operator with
three arguments whose value is defined up to total divergencies.

In coordinates, the formal adjoint $P^*$ of the operator $P$ is
\begin{equation}
  \label{eq:54}
  P^* (\psi)^j = (-1)^{|\sigma|}\partial_\sigma(B^{ij\sigma}\psi_i) -
  e^\alpha w^j_{\alpha} \partial_x^{-1}(w^i_{\alpha}\psi_i).
\end{equation}
We stress that the non-local summand of weakly nonlocal operators is
skew-adjoint by construction: we have $(\partial_x^{-1})^*= - \partial_x^{-1}$.

Let us denote by $\ell_{P,\psi}(\varphi)$ the linearization of the
(coefficients of the) operator $P$. We have the following coordinate
expressions:
\begin{multline}
  \label{eq:17}
  \ell_{P,\psi}(\varphi)^i = \pd{B^{ij\sigma}}{u^k_\tau}\partial_\sigma\psi^1_j
  \partial_\tau \varphi^k +
  e^\alpha \pd{w^{i}_\alpha}{u^k_\tau}\partial_\tau \varphi^k
  \partial_x^{-1}(w^j_\alpha\psi_j)
  \\
  + e^\alpha w^{i}_\alpha\partial_x^{-1}
  \left(
    \pd{w^j_\alpha}{u^k_\tau}\partial_\tau \varphi^k\psi_j
  \right),
\end{multline}
where we used \eqref{eq:25} and the fact that $\partial_x^{-1}$ commutes with
linearization. Then, we have the following expression for the Schouten bracket:
\begin{equation}
  \label{eq:283}
  [P,P](\psi^1,\psi^2,\psi^3) = 2[\ell_{P,\psi^1}(P(\psi^2))(\psi^3)
    + \text{cyclic}(\psi^1,\psi^2,\psi^3)],
\end{equation}
where square brackets denote the fact that the expression is calculated up to
total divergencies.
We observe that the expression of the Schouten bracket of two operators can be
written in different ways, which differ up to total divergencies. In the
Appendix we wrote two more expressions that are more commonly used in the
formalism of differential operators, together with a proof of their
equivalence.

\subsection{Dictionary: distributions and differential operators}
\label{sec:dict-distr-diff}

Here we present a dictionary between the language of operators and the language
of distributions for the reader's convenience. The calculus with distributions
is defined in \cite[Subsect.\ 2.3]{DZ}.

The following notation for a local multivector coincide:
\begin{align}
  \label{eq:23}
  P = & B^{i_1}{}^{i_2}_{\sigma_2}{}^{\cdots}_{\cdots}{}^{i_k}_{\sigma_k}
        (u^i(x^1),u^i_\sigma(x^1)) \,
        \delta^{(\sigma_2)}(x^1-x^2)\cdots\delta^{(\sigma_k)}(x^1-x^k),
  \\\label{eq:23bis}
  P = & \int B^{i_1}{}^{i_2}_{\sigma_2}{}^{\cdots}_{\cdots}{}^{i_k}_{\sigma_k}
        \psi^1_{i_1}\partial^{\sigma_2}_{i_2}\psi^2\cdots
        \partial^{\sigma_k}_{i_k}\psi^k\, dx.
\end{align}
In particular the value of the multivector in the distributional notation is
obtained by evaluating it on test vector functions of the arguments $x^1$,
\dots, $x^k$. The above correspondence can easily be extended between the
nonlocal multivectors~\eqref{eq:55} and~\eqref{eq:10}. Then, it is clear that
the expressions~\eqref{SBDZ} and~\eqref{eq:283} coincide up to the evaluation
on test vector functions.

\subsection{The algorithm}
\label{sec:algorithm-2}

The result of the Schouten bracket $[P,P]$~\eqref{eq:283} is a three-vector
 and has the following coordinate expression:
\begin{equation}
  \label{eq:19}
  [P,P](\psi^1,\psi^2,\psi^3) = T(\psi^1,\psi^2,\psi^3)
  = \int T^{i_1 i_2\sigma_2i_3\sigma_3}
  \psi^1_{i_1}\partial_{\sigma_2}\psi^2_{i_2}\partial_{\sigma_3}\psi^3_{i_1}\, dx
\end{equation}
$T^{i_1i_2\sigma_2i_3\sigma_3}$ is defined up to total divergencies: this means
that three-vectors of the type
$\partial_x\left(T^{i_1\sigma_1i_2\sigma_2i_3\sigma_3}
  \partial_{\sigma_1}\psi^1_{i_1} \partial_{\sigma_2}\psi^2_{i_2}
  \partial_{\sigma_3}\psi^3_{i_1}\right)$ are zero. It immediately follows that
a local three-vector which is of order zero in one of its arguments is zero if
and only if its coefficients are zero.

The algorithm in Section~\ref{sec:algorithm} translates into the language of
differential operators as follows. Let us introduce the notation
\begin{equation}
  \label{eq:2}
  \tilde{\psi}^a_\alpha = \partial_x^{-1}(w_\alpha ^i\psi_i^a),
\end{equation}
where $a$ refers to the particular argument of the operator. Then, the vector
functions $\psi^1$, $\psi^2$, $\psi^3$ play the role of test vector functions
of the variables $x$, $y$, $z$ in the language of distributions.

\begin{enumerate}
\item The first step in Section~\ref{sec:algorithm} is not needed in the
  differential operator formalism, as it boils down to a change in the variable
  of integration (and its differential consequences).
\item The second step aims at bringing the nonlocal part of the three-vector in
  the reduced form~\eqref{eq:13}. To this aim, we remark that the reduced form
  of the distributions implies that there is no distribution of the type
  $\nu(x-y)$ acting on two vector test functions. This means effecting the
  following substitution (up to total divergencies)
  \begin{multline}
    \label{eq:1}
    e^\alpha w^{i}_\alpha\partial_x^{-1}
    \left(\pd{w^j_\alpha}{u^k_\tau}\partial_\tau (
    B^{kp\sigma}\partial_\sigma\psi^b_p
    + e^\alpha w^k_\alpha \tilde{\psi}^b_\alpha)^k\psi_j^c
    \right)\psi^a_i =
  \\
  - e^\alpha\tilde{\psi}^a_\alpha\left(\pd{w^j_\alpha}{u^k_\tau}\partial_\tau (
    B^{kp\sigma}\partial_\sigma\psi^b_p
    + e^\alpha w^k_\alpha \tilde{\psi}^b_\alpha)^k\psi_j^c
    \right)
  \end{multline}
  After such a substitution, we observe that the generic
  summands of~\eqref{eq:17} are of three types:
  \begin{gather}
    \label{eq:16}
    C^{\alpha\beta k}\tilde{\psi}^a_\alpha\tilde{\psi}^b_\beta \psi_k^c,
    \\ \label{eq:30}
    C^{\alpha kj\sigma}\tilde{\psi}^a_\alpha\partial_{\sigma}(\psi_j^b) \psi_k^c,
    \\ \label{eq:36}
    C^{kj\sigma i\tau}\partial_\tau(\psi^a_i)\partial_{\sigma}(\psi^b_j) \psi^c_k,
  \end{gather}
  where $C$'s are functions of $(u^i, u^i_\sigma)$. The reduced form of the
  three-vector in the formalism of differential operators amounts at bringing
  the operator to a canonical form where the arguments $\psi^a$, $\psi^b$,
  $\psi^c$ are a fixed sequence of integers (say, $1$, $2$, $3$) or its cyclic
  permutations (in the previous example, $3$, $1$, $2$ and $2$, $3$, $1$).
  This task can always be achieved by integration by parts that will produce
  the required terms plus extra terms.
\item The third step of the algorithm amounts at bringing the local part into a
  reduced form. This is achieved with the usual procedure of
  integrating by parts the three-vector with respect to one distinguished
  argument (say $\psi^3$) in such a way that the result will be of order zero
  in that argument.
\end{enumerate}

\section{Jacobi identity and Poisson Vertex Algebras}
\label{sec:nPVA}

\subsection{The Jacobi identity}
\label{sec:nPVA1}
\begin{definition}
  A (nonlocal) Poisson vertex algebra (PVA) is a differential algebra $(%
  \mathcal{A},\partial)$ endowed with a derivation $\partial$ and a bilinear
  operation $\{\cdot_\lambda\cdot\}\colon\mathcal{A}\otimes\mathcal{A}\to%
  \mathbb{R}((\lambda^{-1}))\otimes\mathcal{A}$ called a (nonlocal) \emph{$%
    \lambda$-bracket}, satisfying the following set of properties:
  \begin{enumerate}
  \item $\{\partial f_\lambda g\}=-\lambda\{f_\lambda g\}$ (left
    sesquilinearity),
  \item $\{f_\lambda \partial g\}=(\lambda+\partial)\{f_\lambda g\}$ (right
    sesquilinearity),
  \item $\{f_\lambda gh\}=\{f_\lambda g\}h +\{f_\lambda h\}g$ (left Leibnitz
    property),
  \item
    $\{fg_\lambda h\}=\{f_{\lambda+\partial} h\}g+\{g_{\lambda+\partial} h\} f$
    (right Leibnitz property),
  \item $\{g_\lambda f\}=-{}_\to\{f_{-\lambda-g}g\}$ (PVA skew-symmetry),
  \item
    $\{f_\lambda\{g_\mu h\}\}-\{g_\mu\{f_\lambda h\}\}=\{\{f_\lambda
    g\}_{\lambda+\mu} h\}$ (PVA-Jacobi identity).
  \end{enumerate}
\end{definition}

In the notation for the bracket, the symbol separating the two arguments is the
formal parameter of the expansion. We denote
\begin{equation*}
\{f_\lambda g\}=\sum_{s\leq S} C_s(f,g)\lambda^s,
\end{equation*}
with $C_s(f,g)\in\A$; the argument signals that each of the coefficients of the
expansion depends on the two elements $f$ and $g$ in $\A$. Such an expansion is
bounded by $0\leq s\leq S$ for \emph{local} PVAs and is not bounded from below
for \emph{nonlocal} PVAs.

The special notation used on the RHS of Property 4 is to be understood as
$$
\{f_{\lambda+\partial}g\}h=\sum_{s\leq S} C_s(f,g)(\lambda+\partial)^s h=\sum_{s,t}
\binom{s}{t}C_s(f,g) \partial^t h\,\lambda^{s-t}.
$$
Similarly, the RHS of Property 5 (the skewsymmetry) reads
$$
{}_\to\{f_{-\lambda-\partial}g\}=\sum_s(-\lambda-%
\partial)^sC_s(f,g).
$$

For a nonlocal $\lambda$ bracket, the three terms of
PVA-Jacobi identity do not necessarily belong to the same space, because of the
double infinite expansion of the brackets (in terms of $(\lambda,\mu)$, $%
(\mu,\lambda)$ and $(\lambda,\lambda+\mu)$, respectively). A bracket is said to
be \emph{admissible} if all the three terms can be (not uniquely) expanded as
\begin{equation*}
\{f_\lambda\{g_\mu h\}\}=\sum_{m\leq M}\sum_{n\leq N}\sum_{p\leq
	0}a_{m,n,p}\lambda^m\mu^n(\lambda+\mu)^p.
\end{equation*}
Only admissible brackets can define a nonlocal PVA. We denote the space
where the PVA-Jacobi identity of admissible brackets takes values by $%
V_{\lambda,\mu}$. This space can be decomposed by the total degree $d$ in $%
(\lambda,\mu,\lambda+\mu)$; elements of each homogeneous component $%
V^{(d)}_{\lambda,\mu}$ can be \emph{uniquely} expressed in the basis \cite%
{PV2}
\begin{align*}
\lambda^i\mu^{d-i}& & i&\in \mathbb{Z}, \\
\lambda^{d+i}(\lambda+\mu)^{-i}& & i&=\{1,2,\ldots\}.
\end{align*}
This filtration in the total degree $d$ and the subsequent choice of a basis
plays a crucial role in obtaining the normal form for the PVA-Jacobi identity.

The main result used to perform most of the computations is the so called
\emph{master formula}. Under the hypothesis that the differential algebra $\A$
is generated by the elements $(u^i)$, the $\lambda$-bracket between any two
elements of $\A$ is explicitly given by
\begin{equation} \label{eq:masterformula}
\{f_\lambda g\}= 
\frac{\partial g}{%
	\partial u^j_\sigma}\left(\lambda+\partial\right)^\sigma\{u^i_{\lambda+%
	\partial}u^j\}\left(-\lambda-\partial\right)^\tau
\frac{\partial f}{\partial u^i_\tau}
\end{equation}
Thus, the structure of a PVA is defined by the matrix of the $\lambda$ brackets
between the generators $\{u^i_{\lambda}u^j\}=P^{ji}(\lambda)$. 

In the nonlocal case, expressions such as $(\lambda+\d)^p$ for $p<0$ arise in
$P^{ji}(\lambda+\d)$ from the master formula \eqref{eq:masterformula}. In such
cases, the rigorous approach -- working for any kind of nonlocality --
is to expand the negative powers of $(\lambda+\d)$ as
\begin{equation}
(\lambda+\d)^{-p}=\sum_{k\geq 0}\binom{-p}{k}\lambda^{-p-k}\d^k,\qquad\qquad p>0
\end{equation}
In the weakly nonlocal case this can be avoided, relying only on Properties
(1)-(4) of the lambda bracket. More details on this will be provided in Section
\ref{ssec:basis}.

\subsection{Dictionary: Poisson Vertex Algebras and differential
	operators}
\label{sec:PVA-voc}

The connection between the theory of PVA and Hamiltonian operator is given by
Theorem 4.8 in \cite[pag.~261]{PV2}. In short, there is a 1-1 correspondence
between $\lambda$-brackets of a (nonlocal) PVA and (pseudo)differential
Hamiltonian operators; the entries of the matrix $P^{ji}(\lambda)$ correspond
to the differential operator $P^{ij}$~\eqref{eq:10} after the formal
replacement of $\lambda$ by $\partial$.

More precisely, the equivalence between the expression of the Poisson bracket
\eqref{eq:7} and the expression of a $\lambda$-bracket according with the
master formula \eqref{eq:masterformula} is:
\begin{multline}\label{eq:5}
\{F,G\}_J =
\int \frac{\delta f}{\delta u^i}P^{ij\sigma}
\partial_\sigma\frac{\delta g}{\delta u^j} \, dx
\\
= \int\frac{\partial g}{\partial u^i_\sigma}\partial_\sigma
\left(P^{ij\tau}\partial_\tau
(-\partial)^\epsilon\frac{\partial f}{\partial u^j_\epsilon}
\right) dx =
\int\{f_\lambda g\}\big\vert_{\lambda=0}\,dx ,
\end{multline}
using \eqref{eq:masterformula}. The PVA-Jacobi identity for a triple of
generators $(u^i, u^j, u^k)$ can also be expressed by means of differential
operators. First of all, we compute the PVA-Jacobi identity using the master
formula; we have
\begin{gather}
\label{eq:PVA1}
\{u^i_\lambda\{u^j_\mu u^k\}\}=
\frac{\partial P^{kj}(\mu)}{\d u^l_\sigma}(\lambda+\d)^\sigma P^{li}(\lambda)
\\
\label{eq:PVA2}
\{u^j_\mu\{u^i_\lambda u^k\}\}=
\frac{\partial P^{ki}(\lambda)}{\d u^l_\sigma}(\mu+\d)^\sigma P^{lj}(\mu)
\\
\label{eq:PVA3}
\{\{u^i_\lambda u^j\}_{\lambda+\mu}u^k\}
= P^{kl}(\lambda+\mu+\d)(-\lambda-\mu-\d)^\sigma
\frac{\partial P^{ji}(\lambda)}{\d u^l_\sigma}
\end{gather}
The PVA-Jacobi identity is $J^{ijk}_{\lambda,\mu}(P,P)=$ \eqref{eq:PVA1} $-$
\eqref{eq:PVA2} $-$ \eqref{eq:PVA3} $=0$. We evaluate the expression on three
covectors $\psi^1_i\psi^2_j\psi^3_k$, and regard each power of $\lambda$ as
derivations acting on $\psi^1$, and each power of $\mu$ as derivations acting
on $\psi^2$.  Then, the three summands correspond to
\begin{gather}
\langle \psi^3,\ell_{P,\psi^1}(P\psi^2)\rangle,
\\
\langle\psi^3,\ell_{P,\psi^2}(P\psi^1)\rangle,
\\
\langle\psi^3,P\,\ell^*_{P,\psi^1}(\psi^2)\rangle,
\end{gather}
respectively, and the PVA-Jacobi identity is the vanishing of the Schouten
bracket $[P,P]$ in the form of~\eqref{eq:18}.

\subsection{The algorithm}
\label{sec:algorithm-1}

For the local case, the expression of the PVA-Jacobi identity is a polynomial
in $\lambda$ and $\mu$, and the vanishing of the coefficients of
$\lambda^p\mu^q$ corresponds to the vanishing of the
coefficients for $\partial^p(\psi^1_i)\partial^q(\psi^2_j)\psi^3_k$.

In the nonlocal case, the PVA-Jacobi identity is a Laurent series in
$\lambda^{-1}$, $\mu^{-1}$ and $(\lambda+\mu)^{-1}$ living in the space
$V_{\lambda,\mu}$ defined in Section \ref{sec:nPVA1}: in the weakly nonlocal
case, these coefficients come, respectively, from the expansion of
$(\lambda+\d)^{-1}$, $(\mu+\d)^{-1}$, $(\lambda+\mu+\d)^{-1}$.

From the computation of the PVA-Jacobi identity we obtain seven types of terms
including one or two nonlocal factors, together with the pure local terms; each
of them corresponds to the types of summands in the three-vector of the
Schouten identity in \eqref{eq:19}, as detailed in \eqref{eq:16} and
following. They are
\begin{enumerate}
\item $A^{ijk}\lambda^p\mu^q$ with $p,q\geq 0$, corresponding to
  $\d^p(\psi_i^1)\d^q(\psi_j^2)\psi^3_k$;
\item $w^k(\lambda+\mu+\d)^{-1}A^{ij}\lambda^p$ with $p\geq0$, corresponding to
  $\d^p(\psi_i^1)\psi_j^2\tilde{\psi}^3$;
\item $w^k(\lambda+\mu+\d)^{-1}A^{ij}\mu^p$ with $p>0$, corresponding to
  $\psi_i^1\d^p(\psi_j^2)\tilde{\psi}^3$;
\item $[(\lambda+\d)^{-1}w^i]A^{jk}\mu^p$ with $p\geq0$, corresponding to
  $\tilde{\psi}^1\d^p(\psi_j^2)\psi^3_k$;
\item $[(\mu+\d)^{-1}w^j]A^{ki}\lambda^p$ with $p\geq 0$, corresponding to
  $\d^p(\psi_i^1)\tilde{\psi}^2\psi^3_k$;
\item $w^k(\lambda+\mu+\d)^{-1}A^j(\lambda+\d)^{-1}w^i$, corresponding to
  $\tilde{\psi}^1\psi_j^2\tilde{\psi}^3$;
\item $w^k(\lambda+\mu+\d)^{-1}A^i(\mu+\d)^{-1}w^j$, corresponding to
  $\psi_i^1\tilde{\psi}^2\tilde{\psi}^3$;
\item $[(\lambda+\d)^{-1}w^i]A^k[(\mu+\d)^{-1}w^j]$, corresponding to
  $\tilde{\psi}^1\tilde{\psi}^2\psi^3_k$.
\end{enumerate}
The square brackets denote that the differential operators obtained by the
expansion of the pseudodifferential operator do not act outside them.

Note that the expansion of the terms 3 and 7 is not expressed in the basis for
$V_{\lambda,\mu}$ we have chosen; on the other hand, terms 3 and 5 do not
correspond to the choice of coefficients for the normalization algorithm of the
previous Sections (when one takes the cyclic ordering
$\tilde{\psi}^a\d^p(\psi^b)\psi^c$).

We give a different treatment of the terms including at most one nonlocal
expression and of the ones with two: in the first case, we bring them to a form
whose expansion is automatically expressed in our chosen basis for
$V_{\lambda,\mu}$; in the second case, we show that the vanishing of the term
7, together with the other ones, is equivalent to the vanishing of the
corresponding terms in the expansion on the basis.

Finally, we comment on the equivalence between the vanishing of the PVA-Jacobi
identity on our chosen basis and as a result of the normalization algorithm of
the previous Sections.

\begin{proposition}
  The terms of type $w^k(\lambda+\mu+\d)^{-1}A^{ij}\mu^p$ can be brought to a
  combination of terms of type $A^{ijk}\lambda^p\mu^q$ with $p,q\geq 0$ and
  $w^k(\lambda+\mu+\d)^{-1}A^{ij}\lambda^p$, reducing the PVA-Jacobi identity
  to the expansion of seven terms.
\end{proposition}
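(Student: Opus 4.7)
The plan is to exploit the algebraic identity $\mu=(\lambda+\mu)-\lambda$ to rewrite the obstructing factor $\mu^p$ in a way that interacts productively with $(\lambda+\mu+\partial)^{-1}$. Expanding binomially,
\[
\mu^p = \sum_{s=0}^p \binom{p}{s}(\lambda+\mu)^s(-\lambda)^{p-s},
\]
and using the fact that $\lambda,\mu$ are formal scalars commuting with multiplication by $A^{ij}$, $w^k$ and with $(\lambda+\mu+\partial)^{-1}$, I would substitute this identity into $w^k(\lambda+\mu+\partial)^{-1}A^{ij}\mu^p$. I would then expand the nonlocal factor $(\lambda+\mu+\partial)^{-1}$ as the formal series $\sum_{r\geq 0}(-1)^r(\lambda+\mu)^{-r-1}\partial^r$ applied to $A^{ij}$, so that the expression becomes a double sum indexed by $(s,r)$ whose general term carries the factor $(\lambda+\mu)^{s-r-1}\partial^r(A^{ij})$.

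Next I would split this double sum according to the sign of $s-r-1$. In the subsum with $s\geq r+1$ the factor $(\lambda+\mu)^{s-r-1}$ is polynomial, which a further binomial expansion turns into $\sum_j\binom{s-r-1}{j}\lambda^j\mu^{s-r-1-j}$. This yields an honest polynomial in $\lambda,\mu$ with coefficients of the form $w^k\partial^r(A^{ij})$, which by the classification of Section~\ref{sec:algorithm-1} is precisely a combination of type-$1$ expressions $A^{ijk}\lambda^p\mu^q$.

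The subsum with $s\leq r$ is where the key regrouping step lies. Fixing $s$ and setting $m=r-s\geq 0$, the inner tail becomes
\[
(-1)^s\sum_{m\geq 0}(-1)^m(\lambda+\mu)^{-m-1}\partial^m\!\bigl(\partial^s A^{ij}\bigr),
\]
which I would recognise as $(-1)^s(\lambda+\mu+\partial)^{-1}(\partial^s A^{ij})$: the infinite Laurent tail telescopes back into a single pseudodifferential factor applied to the new coefficient $\partial^s(A^{ij})$. Multiplying by the surviving prefactor $(-1)^{p-s}\binom{p}{s}\lambda^{p-s}$ and by $w^k$ produces a contribution $(-1)^p\binom{p}{s}\lambda^{p-s}w^k(\lambda+\mu+\partial)^{-1}(\partial^s A^{ij})$, which is exactly a term of type~$2$ with new coefficient $\partial^s(A^{ij})$ and $\lambda$-power $p-s$. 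Summing the contributions from both subsums over $s=0,\dots,p$ then gives the desired decomposition of the original type-$3$ term into a combination of types $1$ and $2$, reducing the classification to seven terms as claimed. I expect the only real obstacle to be recognising this telescoping of the $(\lambda+\mu)^{-1}$-tail back into a single pseudodifferential factor: without it the first step alone leaves a formally infinite Laurent series which, although well-defined in $V_{\lambda,\mu}$, is not manifestly a finite combination of the eight canonical types.
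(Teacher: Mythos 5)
Your argument is correct, but it takes a genuinely different route from the paper. The paper's proof rests on the operator-level decomposition $\mu=(\lambda+\mu+\partial)-(\lambda+\partial)$, i.e.\ on the binomial identity $(\lambda+\mu+\partial)^p=\sum_{l=0}^p\binom{p}{l}\mu^{p-l}(\lambda+\partial)^l$ applied to $A^{ij}$: the leading term cancels directly against $(\lambda+\mu+\partial)^{-1}$ to give a local contribution, and the remainder has strictly smaller degree in $\mu$, so the reduction is a finite descent that never leaves the algebra of finite expressions --- in keeping with the paper's remark that in the weakly nonlocal case the Laurent expansion of negative powers can be avoided. You instead use the scalar decomposition $\mu=(\lambda+\mu)-\lambda$, expand $(\lambda+\mu+\partial)^{-1}$ into its full Laurent tail, split the resulting double sum, and resum the $s\le r$ part back into $(-1)^s(\lambda+\mu+\partial)^{-1}(\partial^sA^{ij})$. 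This is legitimate (each fixed power of $(\lambda+\mu)$ receives only finitely many contributions, so the rearrangement is valid in $V_{\lambda,\mu}$) and it buys you something the paper's recursion does not: a closed one-pass formula, with nonlocal part $(-1)^p\sum_{s=0}^p\binom{p}{s}\lambda^{p-s}\,w^k(\lambda+\mu+\partial)^{-1}(\partial^sA^{ij})$, which makes the output types and coefficients completely explicit. Note, though, that your telescoping step is exactly the operator identity $(\lambda+\mu)\circ(\lambda+\mu+\partial)^{-1}=1-(\lambda+\mu+\partial)^{-1}\circ\partial$; invoking that identity directly would let you dispense with the infinite series entirely and would turn your argument into a non-recursive variant of the paper's, which is arguably the cleanest form of the proof.
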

\begin{proof}
  From the expansion
  $(\lambda+\mu+\d)^p=\sum_{l=0}^p\binom{p}{l}\mu^{p-l}(\lambda+\d)^l$ we can
  rewrite a term of the form $w^k(\lambda+\mu+\d)^{-1}A^{ij}\mu^p$ as
	$$
	w^k(\lambda+\mu+\d)^{-1}\left[(\lambda+\mu+\d)^p A^{ij}
          -\sum_{l=0}^{p-1}\binom{p}{l}\mu^{l}(\lambda+\d)^{p-l}A^{ij}\right],
	$$
	which gives
	$$
	w^k(\lambda+\mu+\d)^{p-1}
        A^{ij}-w^k(\lambda+\mu+\d)^{-1}\left[\sum_{l=0}^{p-1}\binom{p}{l}\mu^{l}(\lambda+\d)^{p-l}A^{ij}\right].
	$$
	The expression in the square bracket has top degree $p-1$ in
        $\mu$. Repeating the operation we obtain only local terms or terms of
        the type 2.
\end{proof}
\begin{theorem}
  The PVA-Jacobi identity, expressed using terms of the type $1$, $2$, $4$ --
  $8$ as above, can always be expressed in the space
  $V^{(d)}_{\lambda,\mu}$. This latter expression vanishes if and only if the
  former does.
\end{theorem}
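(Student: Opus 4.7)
The plan is to exhibit an explicit expansion of each of the seven types of summands in the chosen basis
\[
\{\lambda^i\mu^{d-i}\}_{i\in\mathbb{Z}}\ \cup\ \{\lambda^{d+i}(\lambda+\mu)^{-i}\}_{i\geq 1}
\]
of $V^{(d)}_{\lambda,\mu}$, and then to conclude by invoking the uniqueness of the basis representation. Admissibility guarantees that the PVA-Jacobi identity lies in $V_{\lambda,\mu}=\bigoplus_d V^{(d)}_{\lambda,\mu}$, and the total-degree decomposition is respected by each of the seven types of summand separately, so it suffices to work one fixed $d$ at a time.

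For the four types (1, 2, 4, 5) containing at most one pseudodifferential factor the expansion is immediate. Type 1 is already in the basis. For types 2, 4, 5, I would expand the single nonlocal factor as a geometric series
\[
(\lambda+\d)^{-1}=\sum_{n\geq 0}(-1)^n\d^n\lambda^{-n-1},
\]
and analogously for $(\mu+\d)^{-1}$ and $(\lambda+\mu+\d)^{-1}$. After the $\d^n$'s act on the coefficients $w^\bullet\in\A$, one is left with a finite sum of monomials $\lambda^p\mu^{-q}$, $\lambda^{-p}\mu^q$ or $\lambda^p(\lambda+\mu)^{-q}$ (with $p\geq 0$, $q\geq 1$) times elements of $\A$; each is already a basis vector in the appropriate component, in the last case after the trivial rewrite $\lambda^p(\lambda+\mu)^{-q}=\lambda^{d+q}(\lambda+\mu)^{-q}$ with $d=p-q$.

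The three remaining types (6, 7, 8) carry two pseudodifferential factors. Expanding both factors as above yields monomials $\lambda^{-p}\mu^{-q}$ from type 8 and $\lambda^{-p}(\lambda+\mu)^{-q}$ from type 6, both of which are already basis vectors, together with the genuinely off-basis $\mu^{-p}(\lambda+\mu)^{-q}$ coming from type 7. The latter are reduced to basis form by iterated partial fractions, starting from
\[
\frac{1}{\mu(\lambda+\mu)}=\frac{1}{\lambda\mu}-\frac{1}{\lambda(\lambda+\mu)},
\]
which expresses $\mu^{-p}(\lambda+\mu)^{-q}$ as a finite $\mathbb{Z}$-linear combination of $\lambda^{-a}\mu^{-b}$ and $\lambda^{-a}(\lambda+\mu)^{-b}$, both genuine basis vectors.

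Collecting all contributions, the PVA-Jacobi identity is realised in each $V^{(d)}_{\lambda,\mu}$ as a finite $\A$-linear combination of basis vectors; its vanishing is then equivalent to the vanishing of every basis coefficient, by $\A$-linear independence of the basis. The main obstacle I anticipate is the combinatorial book-keeping of the type-7 partial-fraction output, whose summands superpose on genuine type-6 and type-8 contributions at the same basis vectors: one must check that the recombination among coefficients is triangular, and hence invertible, so that vanishing of all basis coefficients is genuinely equivalent to vanishing of the original expression written with terms of types 1, 2, 4--8, with no hidden cancellations across types.
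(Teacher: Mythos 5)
Your overall strategy --- expand each of the seven types on the chosen basis of $V^{(d)}_{\lambda,\mu}$ and invoke uniqueness of the representation --- is the same as the paper's, and you correctly single out type 7 as the only contribution that is genuinely off-basis. However, the step you defer as ``the main obstacle I anticipate'' is not book-keeping: it is the actual content of the theorem, and your proof stops exactly where the paper's proof begins to work. Concretely, ``the expression vanishes iff every basis coefficient vanishes'' is immediate from linear independence and proves nothing by itself; what must be shown is that this infinite list of degree-by-degree conditions is equivalent to the finite list of conditions given by the leading coefficients of the types $2$, $4$--$8$. For that you need the observation, absent from your argument, that for each type the entire Laurent tail is obtained from the $t=0$ coefficient by applying total derivatives $\d^t$, so the whole series vanishes as soon as its leading coefficient does; this is also what allows one to discard the $t\geq 1$ contributions of the single-nonlocality types to $V^{(-2)}_{\lambda,\mu}$ once the $V^{(-1)}_{\lambda,\mu}$ conditions are imposed, so that only the double-nonlocality types compete at degree $-2$.

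The second, more serious problem is that with your specific rewrite the deferred triangularity fails at leading order. Writing $c_6$, $c_7$, $c_8$ for the leading coefficients of types 6, 7, 8 (which occupy $(\lambda+\mu)^{-1}\lambda^{-1}$, $(\lambda+\mu)^{-1}\mu^{-1}$ and $\lambda^{-1}\mu^{-1}$ respectively), your partial fraction $\mu^{-1}(\lambda+\mu)^{-1}=\lambda^{-1}\mu^{-1}-\lambda^{-1}(\lambda+\mu)^{-1}$ dumps $c_7$ into the two slots already occupied by $c_6$ and $c_8$: the degree $-2$ conditions become $c_6-c_7=0$ and $c_7+c_8=0$, an underdetermined system that does not force $c_6=c_7=c_8=0$, i.e.\ exactly the ``hidden cancellation across types'' you were worried about. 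The paper's proof avoids this by re-expanding $(\lambda+\mu)^{-1}\mu^{-1}$ as $\lambda^{-1}\mu^{-1}-(\lambda+\mu)^{-2}+\sum_{m>0}c_m(\lambda+\mu)^{-2-m}\lambda^m$, so that type 7's leading coefficient sits alone in the $(\lambda+\mu)^{-2}$ slot; the three slots $(\lambda+\mu)^{-1}\lambda^{-1}$, $(\lambda+\mu)^{-2}$ and $\lambda^{-1}\mu^{-1}$ then determine $c_6$, $c_7$ and $c_8$ in turn, and the elimination is genuinely triangular. To repair your argument you must either adopt such a rearrangement of type 7, or supply the additional descent to lower degrees showing that the tails complete your underdetermined leading-order system to one that forces each $c$ to vanish separately; as written, neither is done.
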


\begin{proof}
  Expressing the PVA-Jacobi identity in the space $V^{(d)}_{\lambda\,\mu}$, for
  all for $d\leq D$, means expanding it on the basis $\lambda^p \mu^{d-p}$,
  $p\in \mathbb{Z}$ and $(\lambda+\mu)^{-p}\lambda^{d+p}$, $p>0$. Terms of type
  1 do not need to be expanded, as they are already expressed in the basis for
  $V^{(d)}_{\lambda,\mu}$, $d\geq 0$.
	
  For the types with one nonlocal term, namely $2$, $4$ and $5$ in the previous
  list, the expansions of the pseudodifferential operators give the series
  \begin{gather}\label{eq:lm_non_l}
    \sum_{t\geq
      0}(-1)^tw^k(\lambda+\mu)^{-t-1}\lambda^p\d^tA^{ij}\\\label{eq:lm_non_m}
    \sum_{t\geq 0}(-1)^t A^{jk}\mu^p\lambda^{-t-1}\d^tw^i\\\label{eq:l_non_m}
    \sum_{t\geq 0}(-1)^t A^{ki}\lambda^p\mu^{-t-1}\d^tw^j
  \end{gather}
  which are in our chosen basis of $V^{(d)}_{\lambda,\mu}$, for $d\leq
  p-1$. The vanishing of the $t=0$ term in the expansion is a sufficient and
  necessary condition for the vanishing of the whole series: all the subsequent
  terms in the expansion vanish if the first one does, and it must vanish
  because that is the only one in $V^{(p-1)}_{\lambda,\mu}$ containing the
  factor $(\lambda+\mu)^{-1}$ (resp. $\lambda^{-1}$ and $\mu^{-1}$). It is
  hence enough to check (or impose) the vanishing of the coefficients $A$ or
  $w$. However, the vanishing
  of the $w$ terms coincides with the dropping of the nonlocal part of the
  $\lambda$ bracket, so the condition is only on $A$'s.
	
  A similar point can be made for the types 6 and 8 with the double
  nonlocality: their expansion is expressed in our chosen basis and starts,
  respectively, with $(\lambda+\mu)^{-1}\lambda^{-1}$ and
  $\lambda^{-1}\mu^{-1}$ in $V^{(-2)}_{\lambda,\mu}$. The expansion of the term
  7 starts with $A^iw^kw^j(\lambda+\mu)^{-1}\mu^{-1}$, which is not an element
  in the basis of $V^{(-2)}_{\lambda,\mu}$. However, this is a term we can
  rearrange as an infinite series
  \begin{displaymath}
    A^iw^kw^j\left(\lambda^{-1}\mu^{-1}-(\lambda+\mu)^{-2}
      + \sum_{m>0}c_m(\lambda+\mu)^{-2-m}\lambda^m\right)
  \end{displaymath}
  for some fixed constants $c_m$.
	
  Note that elements in $V^{(-2)}_{\lambda,\mu}$ could be obtained by the
  expansions (for $t=1$) of the previous terms with only one
  nonlocality. However, the vanishing of the elements in
  $V^{(-1)}_{\lambda,\mu}$ implies their vanishing, too, and hence we can focus
  on the terms arising from the expansion of double nonlocalities only.
	
  It is straightforward to see that we get only one expression in front of
  $(\lambda+\mu)^{-1}\lambda^{-1}$ (from type 6) and $(\lambda+\mu)^{-2}$ (from
  our rearrangement of type 8); on the other hand, there could be two sources
  of terms of the form $\lambda^{-1}\mu^{-1}$.  The vanishing of either $A$ or $w$ for all $i,j,k$ in the first two cases is a necessary and
  sufficient condition; once that this has been imposed or checked, the only
  surviving class of terms of the form $\lambda^{-1}\mu^{-1}$ comes from the
  expansion of 7.
	
  Since the vanishing of $w$ is equivalent to
  simply dropping the nonlocal term of the operator, the condition we
  need to consider is only the vanishing of the expressions $A$'s.
\end{proof}

\begin{remark}
  The above theorem has two important consequences.
  \begin{enumerate}
  \item This algorithm always yields a divergence-free form of the Jacobi
    identity; this means that the Jacobi identity holds if and only if the
    coefficient of the Laurent series in the spaces $V^{(d)}_{\lambda,\mu}$
    vanish.
  \item There is no need to expand in Laurent series: indeed, the expansion is
    always ruled by the zeroth-order coefficients, which are just the
    coefficients of the terms $2$, $4$ -- $8$.
  \end{enumerate}
\end{remark}
\begin{remark}
  Writing the PVA-Jacobi identity on our chosen basis of
  $V^{(d)}_{\lambda,\mu}$ for $d\geq -2$ yields a different result than the one
  obtained with the algorithm described in Section \ref{sec:algorithm-2}. For
  the terms with one nonlocality, indeed, the PVA-Jacobi identity produces the
  coefficients corresponding to $\d^p(\psi_i^1)\psi_j^2\tilde{\psi}^3$,
  $\d^p(\psi_j^2)\psi_k^3\tilde{\psi}^1$ and
  $\d^p(\psi_i^1)\psi_j^3\tilde{\psi}^2$, while the latter is replaced by
  $\d^p(\psi_i^3)\psi_j^1\tilde{\psi}^2$ in Section \ref{sec:algorithm-2}.
	
  Nevertheless, the sets of condition given by the vanishing of the
  coefficients in front of the terms obtained with the two different algorithms
  are equivalent. Let us demonstrate it assuming that the terms of type 5 are
  \begin{equation}\label{eq:NormNonLoc1}
    A_2\,\lambda^2(\mu+\d)^{-1}w+A_1\,\lambda(\mu+\d)^{-1}w+A_0\,(\mu+\d)^{-1}w,
  \end{equation}
  corresponding to
  \begin{equation}
    A_2\,\tilde{\psi}^2\d^2(\psi^1)\psi^3+A_1\tilde{\psi}^2\d(\psi^1)\psi^3
    +A_0\,\tilde{\psi}^2\psi^1\psi^3.
  \end{equation}
  This latest expression is equivalent, up to total derivatives, to
  \begin{multline}\label{eq:NormNonLoc2}
    A_2\tilde{\psi}^2\psi^1\d^2(\psi^3)+\left(2\d
      A_2-A_1\right)\tilde{\psi}^2\psi^1\d(\psi^3)+\left(A_0+\d^2A_2-\d
      A_1\right)\tilde{\psi}^2\psi^1\psi^3\\+\text{local terms}.
  \end{multline}
  The vanishing of expression \eqref{eq:NormNonLoc2} at top degree implies the
  vanishing of the lower degree coefficients, being hence equivalent to the
  vanishing of \eqref{eq:NormNonLoc1}.

  The same result can be obtained in the framework of Poisson vertex algebras
  introducing the symbol $\nu=-\lambda-\mu-\d$, representing derivations acting
  on $\psi^3$ \cite[Section~4.1]{dSK13}.
\end{remark}

\section{Weakly nonlocal PBHT and distributions}

\subsection{Calculation of the Jacobi identity}
\label{sec:calc-jacobi-ident}

In this section we will consider, as an example, weakly nonlocal Poisson
bracket of hydrodynamic type, of the form~\eqref{eq:8}. In the language of
distribution it has the form
\begin{equation}
P^{ij}_{x,y}=g^{ij}({\bf u}(x))\delta'_{xy}+ \Gamma^{ij}_k({\bf
    u}(x))u^k_x\delta_{xy}+ w^i_s({\bf u}(x))u^s_x\nu_{xy}w^j_t({\bf
    u}(y))u^t_y\label{eq:20}
\end{equation}
(we will use only one `tail summand' to make calculations simpler)
where $\delta_{xy}=\delta(x-y)$ e $\nu_{xy}=\nu(x-y)$.  We assume $g$ to be non
degenerate. In what follows, an index after a comma denotes a partial
derivative with respect to the corresponding field variable, \emph{e.g.}
$g^{ij}_{,k}= \partial g^{ij}/\partial u^k$.

From the skew-symmetry the two conditions \eqref{eq:52}, \eqref{eq:56} follow,
namely: $g^{ij}=g^{ji}$ and $g^{ij}_{,k}=\Gamma^{ij}_k+\Gamma^{ji}_k$.
We apply now the reducing procedure explained in
Section~\ref{sec:algorithm}. Since $P^{ij}_{xy}$ depend only on $u(x)$ and
$u_x$ each sum in \eqref{SBDZ} contains only two terms.  The Jacobi identity
can be rewritten as
\begin{multline}
\frac{\partial P^{ij}_{x,y}}{\partial u^{l}(x)}P^{lk}_{x,z}
+ \frac{\partial P^{ij}_{x,y}}{\partial u^{l}(y)} P^{lk}_{y,z}
+\frac{\partial P^{ki}_{z,x}}{\partial u^{l}(z)} P^{lj}_{z,y}
+ \frac{\partial P^{ki}_{z,x}}{\partial u^{l}(x)} P^{lj}_{x,y}+
\\
+\frac{\partial P^{jk}_{y,z}}{\partial u^{l}(y)} P^{li}_{y,x}
+ \frac{\partial P^{jk}_{y,z}}{\partial u^{l}(z)} P^{li}_{z,x}
+\frac{\partial P^{ij}_{x,y}}{\partial u^{l}_x} \partial_x P^{lk}_{x,z}
+ \frac{\partial P^{ij}_{x,y}}{\partial u^{l}_y} \partial_y P^{lk}_{y,z}+
\\
\label{SBDZ2}
\frac{\partial P^{ki}_{z,x}}{\partial u^{l}_z} \partial_z P^{lj}_{z,y}
+ \frac{\partial P^{ki}_{z,x}}{\partial u^{l}_x} \partial_x P^{lj}_{x,y}
+\frac{\partial P^{jk}_{y,z}}{\partial u^{l}_y} \partial_y P^{li}_{y,x}
+ \frac{\partial P^{jk}_{y,z}}{\partial u^{l}_z} \partial_z P^{li}_{z,x}=0
\end{multline}

\subsection{Calculation of the reduced form}
\label{sec:calc-reduc-form}

The first summand in \eqref{SBDZ2} is
\begin{multline}\label{eq:26}
  \frac{\partial P^{ij}_{x,y}}{\partial u^{l}(x)}P^{lk}_{x,z}=
  \left(g^{ij}_{,l}\delta'_{xy}+
    \Gamma^{ij}_{s,l}u^s_x\delta_{xy}+
    w^i_{s,l}u^s_x\nu_{xy}w^j_t\,u^t_y\right)\cdot
  \\
  \cdot \left(g^{lk}(x)\delta'_{xz}+\Gamma^{lk}_tu^t_x\delta_{xz}
    + w^l_su^s_x\nu_{xz}w^k_tu^t_z\right)
\end{multline}

The coefficients of the reduced form are listed below.
\begin{itemize}
\item The coefficient of $\delta'_{xy}\delta'_{xz}$ is
  $g^{lk}g^{ij}_{,l}$.
\item The coefficient of $\nu_{xy}\nu_{xz}$ is
  $w^i_{s,l} w^{l}_{m}u^{m}_xu^s_xw^j_tu^t_yw^{k}_{n}u^{n}_z$.
\item The coefficient of $\delta'_{xy}\delta_{xz}$ is
  $g^{ij}_{,l}\Gamma^{lk}_tu^t_x$.
\item The coefficient of $\delta_{xy}\delta'_{xz}$ is
$g^{lk}\Gamma^{ij}_{s,l}u^s_x$.
\item The coefficient of $\delta_{xy}\delta_{xz}$ is
$\Gamma^{ij}_{s,l}\Gamma^{lk}_tu^t_xu^s_x - g^{ij}_{,l}w^l_su^s_xw^k_tu^t_x$.
\item The coefficient of $\delta_{yx}\nu_{yz}$ is
  $-\d_y(g^{ij}_{,l}w^l_su^s_y)w^k_tu^t_z +
  \Gamma^{ij}_{s,l}w^l_ru^r_xu^s_xw^k_tu^t_z$.
\item The coefficient of $\delta_{xz}\nu_{xy}$ is
$w^i_{s,l}\Gamma^{lk}_ru^r_xu^s_xw^j_tu^t_y$.
\item The coefficient of $\delta'_{yx}\nu_{yz}$ is
$ - g^{ij}_{,l}w^l_su^s_yw^k_tu^t_z$.
\item The coefficient of $\delta'_{xz}\nu_{xy}$ is
$g^{lk}w^i_{s,l}u^s_xw^j_tu^t_y$.
\end{itemize}

The second summand in \eqref{SBDZ2} is
\begin{multline}\label{eq:27}
\frac{\partial P^{ij}_{x,y}}{\partial u^{l}(y)}P^{lk}_{y,z}=
w^i_s u^s_x\nu_{xy}g^{lk} w^j_{t,l}u^t_y\delta'_{yz}+
\\
+ w^i_s u^s_x\nu_{xy}w^j_{t,l}\Gamma^{lk}_mu^m_yu^t_y\delta_{yz}
+ w^i_s u^s_x\nu_{xy} w^j_{t,l}w^l_mu^m_yu^t_y\nu_{yz}w^k_nu^n_z.
\end{multline}

The coefficients of the reduced form are listed below:
\begin{itemize}
\item The coefficient of $\delta_{xy}\delta_{xz}$ is
${w^i_s u^s_xg^{lk} w^j_{t,l}u^t_x}$.
\item  The coefficient of $\nu_{xz}\delta'_{zy}$ is
$- w^i_s u^s_xg^{lk} w^j_{t,l}u^t_z$.
\item The coefficient of $\nu_{xz}\delta_{zy}$ is
${- w^i_s u^s_x\d_z\left(g^{lk} w^j_{t,l}u^t_z\right)
+ w^i_s u^s_x w^j_{t,l}\Gamma^{lk}_mu^m_yu^t_y}$.
\item The coefficient of $\nu_{xy}\nu_{yz}$ is
${ w^i_s u^s_x w^j_{t,l}w^l_mu^m_yu^t_yw^k_nu^n_z}$.
\end{itemize}

The third summand in \eqref{SBDZ2} is
\begin{multline}
\frac{\partial P^{ki}_{z,x}}{\partial u^{l}(z)} P^{lj}_{z,y}=
  g^{lj} g^{ki}_{,l}\delta'_{zx}\delta'_{zy}
  + g^{ki}_{,l}\Gamma^{lj}_tu^t_z\delta'_{zx}\delta_{zy}
  + g^{lj}\Gamma^{ki}_{s,l}u^s_z\delta_{zx}\delta'_{zy}
\\
  + \Gamma^{ki}_{s,l}\Gamma^{lj}_tu^t_zu^s_z\delta_{zx}\delta_{zy}
  + g^{ki}_{,l}w^l_su^s_z\delta'_{zx}\nu_{zy}w^j_tu^t_y
  + \Gamma^{ki}_{s,l}w^l_ru^r_zu^s_z\delta_{zx}\nu_{zy}w^j_tu^t_y
\\
  + g^{lj} w^k_{s,l}u^s_z\delta'_{zy}\nu_{zx}w^i_tu^t_x
  + w^k_{s,l}\Gamma^{lj}_tu^t_zu^s_z\delta_{zy}\nu_{zx}w^i_tu^t_x
\\
  + w^k_{s,l}w^{l}_{m}u^{m}_zu^s_z\nu_{zx}w^i_tu^t_x\nu_{zy}w^{j}_{n}u^{n}_y.
\end{multline}

The coefficients of the reduced form are listed below.

\begin{itemize}
\item The coefficient of $\delta''_{xy}\delta_{xz}$ is
$-g^{lj}g^{ki}_{,l}$.
\item The coefficient of $\delta'_{xy}\delta'_{xz}$ is
$-g^{lj} g^{ki}_{,l}$.
\item The coefficient of $\delta'_{xy}\delta_{xz}$ is
  $-\d_x\left(g^{lj}g^{ki}_{,l}\right) - g^{ki}_{,l}\Gamma^{lj}_tu^t_x
  + g^{lj}\Gamma^{ki}_{s,l}u^s_x$.
\item The coefficient of $\delta_{xy}\delta'_{xz}$ is
$ - g^{ki}_{,l}\Gamma^{lj}_tu^t_x$.
\item The coefficient of $\delta_{xy}\delta_{xz}$ is
  $ - \d_x\left(g^{ki}_{,l}\Gamma^{lj}_tu^t_x\right)
  + \Gamma^{ki}_{s,l}\Gamma^{lj}_tu^t_xu^s_x
  - g^{ki}_{,l}w^l_su^s_xw^j_tu^t_x$.
\item The coefficient of $\delta_{xz}\nu_{xy}$ is
$ - \d_x\left(g^{ki}_{,l}w^l_su^s_x\right)w^j_tu^t_y
+  \Gamma^{ki}_{s,l}w^l_ru^r_xu^s_xw^j_tu^t_y$.
\item The coefficient of  $\delta_{zy}\nu_{zx}$ is
$w^k_{s,l}\Gamma^{lj}_tu^t_zu^s_zw^i_tu^t_x$.
\item The coefficient of $\delta'_{zy}\nu_{zx}$ is
$g^{lj} w^k_{s,l}u^s_zw^i_tu^t_x$.
\item The coefficient of $\delta'_{xz}\nu_{xy}$ is
$ - g^{ki}_{,l}w^l_su^s_xw^j_tu^t_y$.
\item The coefficient of $\nu_{zx}\nu_{zy}$ is
$w^k_{s,l}w^{l}_{m}u^{m}_zu^s_zw^i_tu^t_xw^{j}_{n}u^{n}_y$.
\end{itemize}

The fourth summand in \eqref{SBDZ2} is
\begin{multline}
  \frac{\partial P^{ki}_{z,x}}{\partial u^{l}(x)} P^{lj}_{x,y}=
  w^k_s u^s_z\nu_{zx}g^{lj} w^i_{t,l}u^t_x\delta'_{xy}
  \\
  + w^k_s u^s_z\nu_{zx} w^i_{t,l}\Gamma^{lj}_mu^m_xu^t_x\delta_{xy}
  + w^k_s u^s_z\nu_{zx} w^i_{t,l}w^l_mu^m_xu^t_x\nu_{xy}w^j_nu^n_y.
\end{multline}

The coefficients of the reduced form are listed below:
\begin{itemize}
\item The coefficient of $\delta_{xy}\delta_{xz}$ is
$w^k_s u^s_xg^{lj} w^i_{t,l}u^t_x$.
\item The coefficient of $\nu_{zy}\delta'_{yx}$ is
$ - w^k_s u^s_zg^{lj} w^i_{t,l}u^t_y$.
\item The coefficient of $\nu_{zy}\delta_{xy}$ is
$ - w^k_s u^s_z\d_y\left(g^{lj} w^i_{t,l}u^t_y\right) - w^k_s
u^s_z w^i_{t,l}\Gamma^{lj}_mu^m_xu^t_x$.
\item The coefficient of $\nu_{zx}\nu_{xy}$ is
$w^k_s u^s_z w^i_{t,l}w^l_mu^m_xu^t_x w^j_nu^n_y$.
\end{itemize}

The fifth summand in \eqref{SBDZ2} is
\begin{multline}
  \frac{\partial P^{jk}_{y,z}}{\partial u^{l}(y)} P^{li}_{y,x}=
  g^{li} g^{jk}_{,l}\delta'_{yz}\delta'_{yx}
  + g^{jk}_{,l}\Gamma^{li}_tu^t_y\delta'_{yz}\delta_{yx}
  + g^{li}\Gamma^{jk}_{s,l}u^s_y\delta_{yz}\delta'_{yx}
  \\
  + \Gamma^{jk}_{s,l}\Gamma^{li}_tu^t_yu^s_y\delta_{yz}\delta_{yx}
  + g^{jk}_{,l}w^l_su^s_y\delta'_{yz}\nu_{yx}w^i_tu^t_x
  + \Gamma^{jk}_{s,l}w^l_su^s_yu^s_y\delta_{yz}\nu_{yx}w^i_tu^t_x
  \\
  + g^{li} w^j_{s,l}u^s_y\delta'_{yx}\nu_{yz}w^k_tu^t_z+
  + w^j_{s,l}\Gamma^{li}_tu^t_yu^s_y\delta_{yx}\nu_{yz}w^k_tu^t_z
  \\
  + w^j_{s,l}w^{l}_{m}u^{m}_yu^s_y\nu_{yz}w^k_tu^t_z\nu_{yx}w^{i}_{n}u^{n}_x.
\end{multline}

The coefficients of the reduced form are
\begin{itemize}
\item The coefficient of $\delta'_{xy}\delta'_{xz}$ is
$ - g^{li} g^{jk}_{,l}$.
\item The coefficient of $\delta_{xy}\delta''_{xz}$ is
$ - g^{li} g^{jk}_{,l}$.
\item The coefficient of $\delta_{xy}\delta'_{xz}$ is
  $ - \d_x\left(g^{li} g^{jk}_{,l}\right)
  + g^{jk}_{,l}\Gamma^{li}_tu^t_y
  - g^{li} \Gamma^{jk}_{s,l}u^s_x$.
\item The coefficient of $\delta'_{xy}\delta_{xz}$ is
$-g^{li} \Gamma^{jk}_{s,l}u^s_x$.
\item The coefficient of $\delta_{xy}\delta_{xz}$ is
  $-\d_x\left(g^{li} \Gamma^{jk}_{s,l}u^s_x\right)
  + \Gamma^{jk}_{s,l}\Gamma^{li}_tu^t_xu^s_x
  - g^{jk}_{,l}w^l_su^s_xw^i_tu^t_x$.
\item The coefficient of $\delta_{zy}\nu_{zx}$ is
  $ - \d_z\left( g^{jk}_{,l}w^l_su^s_z\right)w^i_tu^t_x
  + \Gamma^{jk}_{s,l}w^l_su^s_yu^s_yw^i_tu^t_x$.
\item The coefficient of $\delta'_{zy}\nu_{zx}$ is
$ - g^{jk}_{,l}w^l_su^s_zw^i_tu^t_x$.
\item The coefficient of $\delta'_{yx}\nu_{yz}$ is
$g^{li} w^j_{s,l}u^s_yw^k_tu^t_z$.
\item The coefficient of $\nu_{yz}\nu_{yx}$ is
$ w^j_{s,l}w^{l}_{m}u^{m}_yu^s_yw^k_tu^t_zw^{i}_{n}u^{n}_x$.
\end{itemize}

The sixth summand in \eqref{SBDZ2} is
\begin{multline}
  \frac{\partial P^{jk}_{y,z}}{\partial u^{l}(z)} P^{li}_{z,x}=
  w^j_s u^s_y\nu_{yz}g^{li} w^k_{t,l}u^t_z\delta'_{zx}
  \\
  + w^j_s u^s_y\nu_{yz} w^k_{t,l}\Gamma^{li}_mu^m_zu^t_z\delta_{zx}
  + w^j_s u^s_y\nu_{yz} w^k_{t,l}w^l_mu^m_zu^t_z\nu_{zx}w^i_nu^n_x.
\end{multline}

The coefficients of the reduced form are listed below
\begin{itemize}
\item The coefficient of $\delta_{xy}\delta_{xz}$ is
$w^j_s u^s_xg^{li} w^k_{t,l}u^t_x$.
\item The coefficient of $\nu_{xy}\delta'_{xz}$ is
$w^j_s u^s_yg^{li} w^k_{t,l}u^t_x$.
\item The coefficient of $\nu_{xy}\delta_{xz}$ is
  $ + w^j_s u^s_y\d_x\left(g^{li} w^k_{t,l}u^t_x\right)
  - w^j_s u^s_y w^k_{t,l}\Gamma^{li}_mu^m_xu^t_x$.
\item The coefficient of $\nu_{yz}\nu_{zx}$ is
$+w^j_s u^s_y w^k_{t,l}w^l_mu^m_zu^t_zw^i_nu^n_x$.
\end{itemize}

The seventh summand in \eqref{SBDZ2} is
\begin{multline}
\frac{\partial P^{ij}_{x,y}}{\partial u^{l}_x}\d_x P^{lk}_{x,z}=\\
\left(\Gamma^{ij}_l\delta_{xy}+ w^i_l\nu_{xy}w^j_t\,u^t_y\right)
\d_x\left(g^{lk}(x)\delta'_{xz}+\Gamma^{lk}_tu^t_x\delta_{xz}
  + w^l_su^s_x\nu_{xz}w^k_tu^t_z\right).
\end{multline}

The coefficients of the reduced form are listed below:
\begin{itemize}
\item The coefficient of $\delta_{xy}\delta''_{xz}$ is
$g^{lk}\Gamma^{ij}_l$.
\item The coefficient of $\delta''_{xz}\nu_{xy}$ is
$w^i_lg^{lk}w^j_r\,u^r_y$.
\item The coefficient of $\delta_{xy}\delta'_{xz}$ is
$\Gamma^{ij}_l\Gamma^{lk}_tu^t_x+\Gamma^{ij}_l g^{lk}_{,s}u^s_x$.
\item The coefficient of $\delta_{xy}\delta_{xz}$ is
  $\Gamma^{ij}_l \Gamma^{lk}_{t,s}u^s_xu^t_x
  +\Gamma^{ij}_l\Gamma^{lk}_tu^t_{xx}
  +\Gamma^{ij}_lw^l_su^s_xw^k_tu^t_x$.
\item The coefficient of $\delta_{yx}\nu_{yz}$ is
  $\Gamma^{ij}_l w^l_{s,m}u^m_xu^s_xw^k_tu^t_z
  + \Gamma^{ij}_lw^l_su^s_{xx} w^k_tu^t_z$.
\item The coefficient of $\delta_{xz}\nu_{xy}$ is
  \begin{displaymath}
    w^i_l\Gamma^{lk}_{t,s}u^s_xu^t_xw^j_r\,u^r_y
  + w^i_l\Gamma^{lk}_tu^t_{xx}w^j_ru^r_y
  + w^i_l w^l_su^s_xw^k_tu^t_xw^j_r\,u^r_y.
\end{displaymath}
\item The coefficient of $\delta'_{xz}\nu_{xy}$ is
$w^i_l g^{lk}_{,s}u^s_xw^j_r\,u^r_y+w^i_l\Gamma^{lk}_tu^t_xw^j_r\,u^r_y$.
\item The coefficient of $\nu_{xz}\nu_{xy}$ is
  $w^i_l w^l_{s,m}u^m_xu^s_xw^k_tu^t_zw^j_r\,u^r_y
  + w^i_l w^l_su^s_{xx}w^k_tu^t_zw^j_r\,u^r_y$.
\end{itemize}

The eighth summand in \eqref{SBDZ2} is
\begin{eqnarray*}
&&\frac{\partial P^{ij}_{x,y}}{\partial
   u^{l}_y}\partial_yP^{lk}_{y,z}=w^i_su^s_x\nu_{xy}w^j_l\d_y\left(
   g^{lk}\delta'_{yz}+\Gamma^{lk}_tu^t_y\delta_{yz}+
   w^l_su^s_y\nu_{yz}w^k_tu^t_z\right).
\end{eqnarray*}

The coefficients of the reduced form are listed below:
\begin{itemize}
\item The coefficient of $\delta_{xy}\delta_{xz}$ is
  $w^i_su^s_xw^j_l g^{lk}_{,m}u^m_x
  - w^i_su^s_x\d_x(w^j_lg^{lk})
  + w^i_su^s_xw^j_l\Gamma^{lk}_tu^t_x$.
\item The coefficient of $\nu_{xz}\delta'_{zy}$ is
  \begin{displaymath}
    - w^i_su^s_xw^j_l g^{lk}_{,m}u^m_z
  + 2 w^i_su^s_x\d_z(w^j_lg^{lk})
  - w^i_su^s_xw^j_l\Gamma^{lk}_tu^t_z.
\end{displaymath}
\item The coefficient of $\nu_{xz}\delta''_{zy}$ is
$w^i_su^s_xw^j_lg^{lk}$.
\item The coefficient of $\nu_{xz}\delta_{zy}$ is
  \begin{multline*}
    w^i_su^s_x\big(\d_z^2(w^j_lg^{lk})-\d_z\left(w^j_l g^{lk}_{,m}u^m_z\right)
      +w^j_l \Gamma^{lk}_{t,m}u^m_zu^t_z
      \\
      + w^j_l\Gamma^{lk}_tu^t_{zz}\delta_{zy}-\d_z(w^j_l\Gamma^{lk}_tu^t_z)
      + w^j_lw^l_ru^r_zw^k_tu^t_z\big).
  \end{multline*}
\item The coefficient of  $\nu_{xy}\nu_{yz}$ is
  $w^i_su^s_xw^j_l w^l_{r,m}u^m_yu^r_yw^k_tu^t_z
  + w^i_su^s_xw^j_lw^l_ru^r_{yy}w^k_tu^t_z$.
\item The coefficient of $\delta_{xy}\delta'_{xz}$ is
$w^i_su^s_xw^j_lg^{lk}$.
\item The coefficient of $\delta'_{xy}\delta_{xz}$ is
$ - w^i_su^s_xw^j_lg^{lk}$.
\end{itemize}

The ninth summand in \eqref{SBDZ2} is
\begin{equation}
\begin{split}
\frac{\partial P^{ki}_{z,x}}{\partial u^{l}_z} & \partial_z P^{lj}_{z,y}=
\Gamma^{ki}_l\f{\d g^{lj}}{\d u^s}u^s_z\delta_{zx}\delta'_{zy}
+g^{lj}\Gamma^{ki}_l\delta_{zx}\delta''_{zy}
\\
&+\Gamma^{ki}_l\Gamma^{lj}_{t,s}u^s_zu^t_z\delta_{zx}\delta_{zy}
+\Gamma^{ki}_l\Gamma^{lj}_tu^t_{zz}\delta_{zx}\delta_{zy}
+\Gamma^{ki}_l\Gamma^{lj}_tu^t_z\delta_{zx}\delta'_{zy}
\\
&+\Gamma^{ki}_l w^l_{s,m}u^m_zu^s_z\delta_{zx}\nu_{zy}w^j_tu^t_y
+ \Gamma^{ki}_lw^l_su^s_{zz}\delta_{zx}\nu_{zy}w^j_tu^t_y
\\
&+ \Gamma^{ki}_lw^l_su^s_z\delta_{zx}\delta_{zy}w^k_tu^t_y+
w^k_l g^{lj}_{,s}u^s_z\delta'_{zy}\nu_{zx}w^i_r\,u^r_x
\\
& + w^k_lg^{lj}\delta''_{zy}\nu_{zx}w^i_r\,u^r_x
  + w^k_l\Gamma^{lj}_{t,s}u^s_zu^t_z\delta_{zy}\nu_{zx}w^j_r\,u^r_x
\\
&+ w^k_l\Gamma^{lj}_tu^t_{zz}\delta_{zy}\nu_{zx}w^i_r\,u^r_x+
  w^k_l\Gamma^{lj}_tu^t_z\delta'_{zy}\nu_{zx}w^i_r\,u^r_x
\\
& + w^k_l w^l_{s,m}u^m_zu^s_z\nu_{zy}w^j_tu^t_y\nu_{zx}w^i_r\,u^r_x
+ w^k_l w^l_su^s_{zz}\nu_{zy}w^j_tu^t_y\nu_{zx}w^i_r\,u^r_x
\\
& + w^k_l w^l_su^s_z\delta_{zy}w^j_tu^t_y\nu_{zx}w^i_r\,u^r_x.
\end{split}
\end{equation}

The coefficients of the reduced form are listed below
\begin{itemize}
\item The coefficient of $\delta''_{xy}\delta_{xz}$ is
$g^{lj}\Gamma^{ki}_l$.
\item The coefficient of $\delta'_{xy}\delta_{xz}$ is
$\Gamma^{ki}_l g^{lj}_{,s}u^s_x+\Gamma^{ki}_l\Gamma^{lj}_tu^t_x$.
\item The coefficient of $\delta_{xy}\delta_{xz}$ is
  $ \Gamma^{ki}_l\Gamma^{lj}_{t,s}u^s_xu^t_x
  + \Gamma^{ki}_l\Gamma^{lj}_tu^t_{xx}
  +\Gamma^{ki}_lw^l_su^s_xw^j_tu^t_x$.
\item The coefficient of $\delta_{xz}\nu_{xy}$ is
  $\left(\Gamma^{ki}_l w^l_{s,m}u^m_xu^s_x
    +\Gamma^{ki}_lw^l_su^s_{xx}\right)w^j_tu^t_y$.
\item The coefficient of $\delta_{zy}\nu_{zx}$ is
  $\big(w^k_l\Gamma^{lj}_tu^t_{zz}
    + w^k_l\Gamma^{lj}_{t,s}u^s_zu^t_z
    +w^k_l w^l_su^s_zw^j_tu^t_z\big)w^j_r\,u^r_x$.
\item The coefficient of $\delta'_{zy}\nu_{zx}$ is
$w^k_l g^{lj}_{,s}u^s_zw^i_r\,u^r_x+ w^k_l\Gamma^{lj}_tu^t_zw^i_r\,u^r_x$.
\item The coefficient of $\nu_{zy}\nu_{zx}$ is
  $ w^k_l w^l_{s,m}u^m_zu^s_zw^j_tu^t_yw^i_r\,u^r_x
  +w^k_l w^l_su^s_{zz}w^j_tu^t_yw^i_r\,u^r_x$.
\item The coefficient of $\delta''_{zy}\nu_{zx}$ is
$w^k_lg^{lj}w^i_r\,u^r_x$.
\end{itemize}

The tenth summand in \eqref{SBDZ2} is
\begin{multline}
\frac{\partial P^{ki}_{z,x}}{\partial u^{l}_x} \partial_x P^{lj}_{x,y}=
w^k_su^s_z\nu_{zx}w^i_l g^{lj}_{,m}u^m_x\delta'_{xy}
+ w^k_su^s_z\nu_{zx}w^i_lg^{lj}\delta''_{xy}
\\
+ w^k_su^s_z\nu_{zx}w^i_l \Gamma^{lj}_{t,m}u^m_xu^t_x\delta_{xy}
+ w^k_su^s_z\nu_{zx}w^i_l\Gamma^{lj}_tu^t_{xx}\delta_{xy}+
w^k_su^s_z\nu_{zx}w^i_l\Gamma^{lj}_tu^t_x\delta'_{xy}
\\
+ w^k_su^s_z\nu_{zx}w^i_l w^l_{r,m}u^m_xu^r_x\nu_{xy}w^j_tu^t_y
+ w^k_su^s_z\nu_{zx}w^i_lw^l_ru^r_{xx}\nu_{xy}w^j_tu^t_y
\\
+ w^k_su^s_z\nu_{zx}w^i_lw^l_ru^r_x\delta_{xy}w^j_tu^t_y.
\end{multline}

The coefficients of the reduced form are listed below:
\begin{itemize}
\item The coefficient of $\delta_{xy}\delta_{xz}$ is
  $w^k_su^s_xw^i_l g^{lj}_{,m}u^m_x
  + w^k_su^s_xw^i_l\Gamma^{lj}_tu^t_x
  + \d_x(w^k_su^s_x)w^i_lg^{lj}$.
\item The coefficient of $\nu_{yz}\delta'_{yx}$ is
  $w^k_su^s_zw^i_l g^{lj}_{,m}u^m_y
  +w^k_su^s_zw^i_l\Gamma^{lj}_tu^t_y
-2w^k_su^s_z\d_y(w^i_lg^{lj})$.
\item The coefficient of $\nu_{zy}\delta_{yx}$ is
  \begin{multline*}
    - w^k_su^s_z\Big(\d_y(w^i_l g^{lj}_{,m}u^m_y)
      + \d_y^2(w^i_lg^{lj}) + w^i_l \Gamma^{lj}_{t,m}u^m_yu^t_y
      \\
      + w^i_l\Gamma^{lj}_tu^t_{yy}
      - \d_y(w^i_l\Gamma^{lj}_tu^t_{y})
      +w^i_lw^l_ru^r_yw^j_tu^t_y\Big).
\end{multline*}
\item The coefficient of $\nu_{zx}\nu_{xy}$ is
  $w^k_su^s_zw^i_l w^l_{r,m}u^m_xu^r_xw^j_tu^t_y
  + w^k_su^s_zw^i_lw^l_ru^r_{xx}w^j_tu^t_y$.
\item The coefficient of $\delta'_{xy}\delta_{xz}$ is
$2 w^k_su^s_xw^i_lg^{lj}$.
\item The coefficient of $\delta_{xy}\delta'_{xz}$ is
$w^k_su^s_xw^i_lg^{lj}$.
\item The coefficient of $\nu_{zy}\delta''_{yx}$ is
$w^k_su^s_zw^i_lg^{lj}$.
\end{itemize}

The eleventh summand in \eqref{SBDZ2} is
\begin{equation}
\begin{split}
\frac{\partial P^{jk}_{y,z}}{\partial u^{l}_y} & \partial_y P^{li}_{y,x}=
\Gamma^{jk}_l g^{li}_{,s}u^s_y\delta_{yz}\delta'_{yx}
+g^{li}\Gamma^{jk}_l\delta_{yz}\delta''_{yx}
+\Gamma^{jk}_l\Gamma^{li}_{t,s}u^s_yu^t_y\delta_{yz}\delta_{yx}
\\
&+\Gamma^{jk}_l\Gamma^{li}_tu^t_{yy}\delta_{yz}\delta_{yx}
+\Gamma^{jk}_l\Gamma^{li}_tu^t_y\delta_{yz}\delta'_{yx}
+ \Gamma^{jk}_l w^l_{s,m}u^m_yu^s_y\delta_{yz}\nu_{yx}w^i_tu^t_x
\\
& + \Gamma^{jk}_lw^l_su^s_{yy}\delta_{yz}\nu_{yx}w^i_tu^t_x
+ \Gamma^{jk}_lw^l_su^s_y\delta_{yz}\delta_{yx}w^i_tu^t_x
+ w^j_l g^{li}_{,s}u^s_y\delta'_{yx}\nu_{yz}w^k_r\,u^r_z
\\
& + w^j_lg^{li}\delta''_{yx}\nu_{yz}w^k_r\,u^r_z
+ w^j_l\Gamma^{li}_{t,s}u^s_yu^t_y\delta_{yx}\nu_{yz}w^k_r\,u^r_z
+ w^j_l\Gamma^{li}_tu^t_{yy}\delta_{yx}\nu_{yz}w^k_r\,u^r_z
\\
& + w^j_l\Gamma^{li}_tu^t_y\delta'_{yx}\nu_{yz}w^k_r\,u^r_z
+ w^j_lw^l_{s,m}u^m_yu^s_y\nu_{yx}w^i_tu^t_x\nu_{yz}w^k_r\,u^r_z
\\
&+ w^j_l w^l_su^s_{yy}\nu_{yx}w^i_tu^t_x\nu_{yz}w^k_r\,u^r_z+ w^j_l
w^l_su^s_y\delta_{yx}w^i_tu^t_x\nu_{yz}w^k_r\,u^r_z.
\end{split}\label{eq:566}
\end{equation}
The coefficients of the reduced form are listed below
\begin{itemize}
\item The coefficient of $\delta_{xy}\delta'_{xz}$ is
  $ - \Gamma^{jk}_l g^{li}_{,s}u^s_x+2\d_x(g^{li}\Gamma^{jk}_l)
  - \Gamma^{jk}_l\Gamma^{li}_tu^t_x$.
\item The coefficient of $\delta'_{xy}\delta_{xz}$ is
  $2\d_x(g^{li}\Gamma^{jk}_l) - \Gamma^{jk}_l g^{li}_{,s}u^s_x
  - \Gamma^{jk}_l\Gamma^{li}_tu^t_x$.
\item The coefficient of $\delta_{xy}\delta_{xz}$ is
  \begin{multline*}
    - \d_x(\Gamma^{jk}_l g^{li}_{,s}u^s_x)
    + \d_x^2(g^{li}\Gamma^{jk}_l)
    + \Gamma^{jk}_l\Gamma^{li}_{t,s}u^s_xu^t_x
    \\
    + \Gamma^{jk}_l\Gamma^{li}_tu^t_{xx}
    - \d_x(\Gamma^{jk}_l\Gamma^{li}_tu^t_x)
    + \Gamma^{jk}_lw^l_su^s_xw^i_tu^t_x.
  \end{multline*}
\item The coefficient of $\delta_{xy}\delta''_{xz}$ is
$g^{li}\Gamma^{jk}_l$.
\item The coefficient of $\delta'_{xy}\delta'_{xz}$ is
$2g^{li}\Gamma^{jk}_l$.
\item The coefficient of $\delta''_{xy}\delta_{xz}$ is
$g^{li}\Gamma^{jk}_l$.
\item The coefficient of $\delta_{yz}\nu_{zx}$ is
  $(\Gamma^{jk}_l w^l_{s,m}u^m_zu^s_z
    + \Gamma^{jk}_lw^l_su^s_{zz})w^i_tu^t_x$.
\item The coefficient of $\delta_{yx}\nu_{yz}$ is
  $(w^j_l\Gamma^{li}_{t,s}u^s_yu^t_y
  + w^j_l\Gamma^{li}_tu^t_{yy}+w^j_l w^l_su^s_yw^i_tu^t_y)
  w^k_r\,u^r_z$.
\item The coefficient of $\delta'_{yx}\nu_{yz}$ is
$w^j_l g^{li}_{,s}u^s_yw^k_r\,u^r_z
+ w^j_l\Gamma^{li}_tu^t_yw^k_r\,u^r_z$.
\item The coefficient of $\nu_{yx}\nu_{yz}$ is
  $w^j_l w^l_{s,m}u^m_yu^s_yw^i_tu^t_xw^k_r\,u^r_z
  +w^j_l w^l_su^s_{yy}w^i_tu^t_xw^k_r\,u^r_z$.
\item The coefficient of $\delta''_{yx}\nu_{yz}$ is
$w^j_lg^{li}w^k_r\,u^r_z$.
\end{itemize}

The twelfth (and last) summand in \eqref{SBDZ2} is
\begin{multline}
\frac{\partial P^{jk}_{y,z}}{\partial u^{l}_z} \partial_z P^{li}_{z,x}=
w^j_su^s_y\nu_{yz}w^k_l g^{li}_{,m}u^m_z\delta'_{zx}
+ w^j_su^s_y\nu_{yz}w^k_lg^{li}\delta''_{zx}
\\
+ w^j_su^s_y\nu_{yz}w^k_l \Gamma^{li}_{t,m}u^m_zu^t_z\delta_{zx}
+ w^j_su^s_y\nu_{yz}w^k_l\Gamma^{li}_tu^t_{zz}\delta_{zx}
+ w^j_su^s_y\nu_{yz}w^k_l\Gamma^{li}_tu^t_z\delta'_{zx}
\\
+ w^j_su^s_y\nu_{yz}w^k_l w^l_{r,m}u^m_zu^r_z\nu_{zx}w^i_tu^t_x+
w^j_su^s_y\nu_{yz}w^k_lw^l_ru^r_{zz}\nu_{zx}w^i_tu^t_x
\\
+ w^j_su^s_y\nu_{yz}w^k_lw^l_ru^r_z\delta_{zx}w^i_tu^t_x.
\end{multline}

The coefficients of the reduced form are listed below
\begin{itemize}
\item The coefficient of $\delta_{xy}\delta_{xz}$ is
  \begin{displaymath}
    w^j_su^s_xw^k_l g^{li}_{,m}u^m_x-\d_x(w^j_su^s_x)w^k_lg^{li}
    + w^j_su^s_xw^k_l\Gamma^{li}_tu^t_x-2 w^j_su^s_x\d_x(w^k_lg^{li}).
  \end{displaymath}
\item The  coefficient of $\nu_{yx}\delta''_{xz}$ is
$w^j_su^s_yw^k_lg^{li}$.
\item The coefficient of $\nu_{yx}\delta'_{xz}$ is
  $2w^j_su^s_y\d_x(w^k_lg^{li})-w^j_su^s_yw^k_l\Gamma^{li}_tu^t_x
  -w^j_su^s_yw^k_l g^{li}_{,m}u^m_x$.
\item The coefficient of $\delta_{xy}\delta'_{xz}$ is
$-2 w^j_su^s_xw^k_lg^{li}$.
\item The coefficient of $\delta'_{xy}\delta_{xz}$ is
$-w^j_su^s_xw^k_lg^{li}$.
\item The coefficient of $\nu_{yx}\delta_{xz}$ is
  \begin{multline*}
    w^j_su^s_y\Big(w^k_l \Gamma^{li}_{t,m}u^m_xu^t_x
      + w^k_l\Gamma^{li}_tu^t_{xx}
      + \d_x^2(w^k_lg^{li})
      \\
      - \d_x(w^k_l\Gamma^{li}_tu^t_x)
      + w^k_lw^l_ru^r_xw^i_tu^t_x
      - \d_x(w^k_l g^{li}_{,m}u^m_x)\Big).
\end{multline*}
\item The coefficient $\nu_{yz}\nu_{zx}$ is
  $w^j_su^s_yw^k_l w^l_{r,m}u^m_zu^r_zw^i_tu^t_x
  +w^j_su^s_yw^k_lw^l_ru^r_{zz}w^i_tu^t_x$.
\end{itemize}

\subsection{The conditions}
\label{sec:conditions}

Collecting all similar terms we get the following conditions
\begin{itemize}
\item The coefficients of $\delta''_{xy}\delta_{xz}$,
  $\delta'_{xy}\delta'_{xz}$ and $\delta_{xy}\delta''_{xz}$ vanish iff
  \eqref{eq:57} holds, namely $g^{li}\Gamma^{jk}_l=g^{lj}\Gamma^{ik}_l$.
  Combining this condition with skew-symmetry of the bracket~\eqref{eq:56} we
  obtain that $\Gamma^i_{jk}=-g_{jl}\Gamma^{li}_k$ are the Christoffel symbols
  of the Levi-Civita connection of $g$.
\item The coefficients of products of step functions vanish.
\item The coefficients of $\delta''_{xz}\nu_{xy}$, $\delta''_{zy}\nu_{zx}$ and
  $\delta''_{yx}\nu_{yz}$ vanish iff \eqref{eq:58} holds, namely
    $g_{ik}w^k_j=g_{jk}w^k_i$.
\item Using the above conditions the coefficient of
  $u^s_{xx}\delta_{xy}\delta_{xz}$ can be written as
  \begin{multline}\label{eq:21}
    g^{li}\Gamma^{jk}_{l,s} - g^{li} \Gamma^{jk}_{s,l}
    +\Gamma^{ij}_l\Gamma^{lk}_s-\Gamma^{ik}_l\Gamma^{lj}_s
    +g^{li}(w^k_sw^j_l-w^j_sw^k_l) =
    \\
    R^{ijk}_s+g^{li}(w^k_sw^j_l-w^j_sw^k_l),
\end{multline}
where $R^{ijk}_s$ is Riemann tensor (in upper indices). This yields the
condition \eqref{eq:60}.
\item  The coefficient of $u^t_{xx}\nu_{xy}\delta_{xz}$ (up to a common factor)
  is
\begin{equation} 
  g^{li}(w^k_{t,l}+\Gamma^{k}_{lm}w^m_t
    - w^k_{l,t}-\Gamma^{k}_{mt}w^m_l)
  = g^{li}(\nabla_lw^k_t-\nabla_tw^k_l),
\end{equation}
which yields the condition~\eqref{eq:59}.
\item The coefficient of $\delta'_{xy}\delta_{xz}$ is a linear combination of
  the coefficient~\eqref{eq:21} repeated two times.
\item The coefficients of $u^r_xu^s_{x}\delta_{xy}\delta_{xz}$ vanish using the
  $x$-derivative of the condition~\eqref{eq:21}.
  Indeed, the coefficient reduces to
\begin{align*}
  &( \Gamma^{ij}_{s,l}-\Gamma^{ij}_{l,s}+w^i_sw^j_l)\Gamma^{lk}_t
    +( \Gamma^{ki}_{s,l}-\Gamma^{ki}_{l,s}+w^k_sw^i_l)\Gamma^{lj}_t
  \\
  & +( \Gamma^{jk}_{s,l} - \Gamma^{jk}_{l,s}+w^j_sw^k_l)\Gamma^{li}_t
    + w^k_sg^{lj}( w^i_{t,l} - w^i_{l,t} + \Gamma^{i}_{lm}w^m_t)
  \\
  & + w^i_sg^{lk}( w^j_{t,l} - w^j_{l,t}+\Gamma^{j}_{lm}w^m_t)
  +w^j_sg^{li}( w^k_{t,l}- w^k_{l,t}+\Gamma^{k}_{lm}w^m_t).
\end{align*}
Using the condition~\eqref{eq:59} we obtain
\begin{align*}
  &( \Gamma^{ij}_{s,l}-\Gamma^{ij}_{l,s}+w^i_sw^j_l)\Gamma^{lk}_t
    +(\Gamma^{ki}_{s,l}- \Gamma^{ki}_{l,s}+w^k_sw^i_l)\Gamma^{lj}_t
  \\
  &+(\Gamma^{jk}_{s,l} - \Gamma^{jk}_{l,s} + w^j_sw^k_l)\Gamma^{li}_t+
    w^k_sg^{lm}(\Gamma^{i}_{tm}w^j_l)
  \\
  & +w^i_sg^{lm}(\Gamma^{j}_{tm}w^k_l) +w^j_sg^{lm}(\Gamma^{k}_{tm}w^i_l).
\end{align*}
Using again~\eqref{eq:21} we obtain
\begin{align*}
  &(\Gamma^{ij}_{s,l}-\Gamma^{ij}_{l,s}+w^i_sw^j_l-w^j_sw^i_l)\Gamma^{lk}_t
    +(\Gamma^{ki}_{s,l}- \Gamma^{ki}_{l,s}+w^k_sw^i_l-w^i_sw^k_l)\Gamma^{lj}_t+
  \\
  &(\Gamma^{jk}_{s,l}- \Gamma^{jk}_{l,s}+w^j_sw^k_l-w^k_sw^j_l)\Gamma^{li}_t=
    \Gamma^{lk}_t(-\Gamma^i_{lm}\Gamma^{mj}_s+\Gamma^j_{lm}\Gamma^{mi}_s)+
  \\
  & \Gamma^{lj}_t(-\Gamma^k_{lm}\Gamma^{mi}_s+\Gamma^i_{lm}\Gamma^{mk}_s)+
  \Gamma^{lj}_t(-\Gamma^j_{lm}\Gamma^{mk}_s+\Gamma^k_{lm}\Gamma^{mj}_s)=0
\end{align*}
\item The coefficient of $u^t_yu^r_{x}u^s_x\nu_{xy}\delta_{xz}$ vanishes
  due to the previous conditions. Indeed
\begin{align*}
  &u^s_xu^r_xw^j_tu^t_y\Big(w^l_s(\Gamma^{ki}_{r,l}-\Gamma^{ki}_{l,r}
    +w^i_lw^k_r-w^k_lw^i_r)
    -(\Gamma^{ki}_l+\Gamma^{ik}_l)\d_rw^l_s
  \\
  & +(\Gamma^{li}_r+\Gamma^{il}_r)(w^k_{s,l}-w^k_{l,s})
    -\Gamma^{ik}_{l,r}w^l_s+g^{li}(w^k_{s,l} - w^k_{l,s})_{,r}
  \\
  & + \Gamma^{li}_r(w^k_{l,s} - w^k_{s,l})
    +w^i_l\Gamma^{lk}_{s,r}+\Gamma^{ki}_lw^l_{s,r}
    +\Gamma^{lk}_s w^i_{r,l}\Big)
  \\
  = & u^s_xu^r_xw^j_tu^t_y\Big(w^l_s(\Gamma^{ki}_{r,l}-\Gamma^{ki}_{l,r}
  +w^i_lw^k_r-w^k_lw^i_r)
  -(\Gamma^{ki}_l+\Gamma^{ik}_l) w^l_{s,r}+
  \\
  & -\Gamma^{ik}_{l,r}w^l_s+
    (\Gamma^{ik}_{l}w^l_s-\Gamma^{lk}_{s}w^i_l)_{,r}
    + \Gamma^{li}_r(w^k_{l,s}-w^k_{s,l})
    \\
  & +w^i_l\Gamma^{lk}_{s,r}+\Gamma^{ki}_lw^l_{s,r}+\Gamma^{lk}_s
    w^i_{r,l}\Big)
  \\
  = &u^s_xu^r_xw^j_tu^t_y\Big(w^m_s(\Gamma^{k}_{rl}\Gamma^{li}_m
    -\Gamma^{i}_{rl}\Gamma^{lk}_m)
    -\Gamma^{lk}_{s}w^i_{l,r}
    \\
   & +\Gamma^{li}_r(\Gamma^{k}_{lm}w^m_s-\Gamma^{k}_{ms}w^m_l)
    +\Gamma^{lk}_s w^i_{r,l}\Big)
  \\
 = &u^s_xu^r_xw^j_tu^t_y\Big(w^m_s\Gamma^{k}_{rl}\Gamma^{li}_m
-\Gamma^{lk}_{s}( w^i_{l,r}+\Gamma^{i}_{mr}w^m_l)+\Gamma^{lk}_sw^i_{r,l}\Big)=0.
\end{align*} 
Similar computations hold for the coefficients of $\nu_{yz}\delta_{yx}$ and
$\nu_{xz}\delta_{zy}$.
\item The coefficient of $u^s_yu^t_x\delta'_{xz}\nu_{xy}$ is
\begin{multline*}
w^j_t\Big(g^{lk} w^i_{s,l}
- g^{ki}_{,l}w^l_s
+g^{li} w^k_{s,l}
+w^i_lg^{lk}_{,s}
\\
+w^i_l\Gamma^{lk}_s
-2(w^k_lg^{li})_{,s}
+w^k_l\Gamma^{li}_s 
+w^k_l g^{li}_{,s}\Big)
\\
=u^s_xu^t_y\Big(g^{lk}(\nabla_l w^i_s-\nabla_s w^i_l)
+g^{li}(\nabla_l w^k_s-\nabla_s w^k_l)\Big),
\end{multline*}
which vanishes upon~\eqref{eq:59}.
\end{itemize}

\section{Weakly nonlocal PBHT and differential operators}

Here we will just show the main steps of the algorithm in
Section~\ref{sec:algorithm-2}.

We assume that
\begin{equation}
  P = g^{ij}\ddx{} + \Gamma^{ij}_k u^k_x
  +  w^i_{k}u^k_x\ddx{-1}w^j_{ h}u^h_x\label{eq:12}
\end{equation}
where $\det (g^{ij})\neq 0$ and $\epsilon_\alpha\in\mathbb{R}$. 

We will compute the conditions of Hamiltonianity of the operator $P$ of the
type~\eqref{eq:12} using formula~\eqref{eq:22} and the
Algorithm~\ref{sec:algorithm-2}.  Let us set:
\begin{equation}
  \label{eq:29}
  P = L + N\quad\text{where}\quad
  L= g^{ij}\ddx{} + \Gamma^{ij}_k u^k_{x},\quad
  N= w^i_ku^k_{x}\ddx{-1}w^j_hu^h_{x}.
\end{equation}
The conditions of skew-adjointness are obvious.

\subsection{Calculation of the Jacobi identity}

From now on we will assume $L$ to be skew-adjoint ($N$ is skew-adjoint by
construction).
\begin{lemma}
  We have
  \begin{align}
    \notag
    \frac{1}{2}[P,P] = &\frac{1}{2}[L,L] + [L,N] + \frac{1}{2}[N,N]
                         \\ \label{eq:32} \begin{split}
    = & [\ell_{L,\psi^1}(L(\psi^2))(\psi^3) +
        \ell_{L,\psi^1}(N(\psi^2))(\psi^3) +
    \\
      &  \ell_{N,\psi^1}(L(\psi^2))(\psi^3) +
        \ell_{N,\psi^1}(N(\psi^2))(\psi^3) +
        \text{cyclic}(\psi^1,\psi^2,\psi^3)]
        \end{split}
  \end{align}
\end{lemma}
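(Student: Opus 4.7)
The plan is to reduce the statement to two routine observations: bilinearity (plus graded symmetry) of the Schouten bracket on the space of variational bivectors, and linearity of the linearization operator $\ell_{\cdot,\psi}$ in its first argument.

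First I would expand $[P,P] = [L+N,L+N]$ using bilinearity of the Schouten bracket. Since $L$ and $N$ are both variational bivectors (even-degree multivectors), the Schouten bracket satisfies the graded symmetry $[L,N] = [N,L]$ on variational multivectors, yielding the identity
\begin{equation*}
  [L+N,L+N] = [L,L] + 2[L,N] + [N,N],
\end{equation*}
from which the first line of \eqref{eq:32} follows immediately after dividing by~$2$.

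Next I would apply formula~\eqref{eq:283} to each of the three brackets $[L,L]$, $[L,N]$, $[N,N]$. The key observation is that the linearization operator is linear in its first argument: $\ell_{L+N,\psi} = \ell_{L,\psi} + \ell_{N,\psi}$, which is clear from the coordinate expression~\eqref{eq:17} (its coefficients depend linearly on the coefficients of the operator). Moreover, $(L+N)(\psi)=L(\psi)+N(\psi)$ is trivially linear in the operator. Substituting $P = L+N$ into $\ell_{P,\psi^1}(P(\psi^2))(\psi^3)$ and expanding both occurrences of $P$, one obtains a sum of four terms: the two ``diagonal'' terms $\ell_{L,\psi^1}(L(\psi^2))(\psi^3)$ and $\ell_{N,\psi^1}(N(\psi^2))(\psi^3)$ correspond (after cyclic summation and the $\frac{1}{2}$ factor from~\eqref{eq:283}) to $\frac{1}{2}[L,L]$ and $\frac{1}{2}[N,N]$ respectively, while the ``cross'' terms $\ell_{L,\psi^1}(N(\psi^2))(\psi^3)+\ell_{N,\psi^1}(L(\psi^2))(\psi^3)$ reproduce $[L,N]$.

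There is essentially no obstacle: the only point to keep track of is that all identities are to be read modulo total divergencies (as emphasized by the square brackets), so one need not worry about boundary terms arising when one re-indexes or re-groups summands. If a more careful justification of the graded symmetry $[L,N]=[N,L]$ for variational bivectors is desired, it can be verified directly from~\eqref{eq:283} by a symmetrization argument in the three test covectors $\psi^1,\psi^2,\psi^3$, again up to total divergencies, as is standard in the variational calculus of multivectors (cf.\ \cite{Dorf,Olv93}).
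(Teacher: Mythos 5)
Your proposal is correct and is exactly the argument the paper intends (the paper states the lemma without proof, treating it as immediate from the bilinearity and symmetry of the Schouten bracket together with formula~\eqref{eq:283} and the linearity of $\ell_{\cdot,\psi}$ in the operator, which is what you verify; the identification of the cross terms with $[L,N]$ is consistent with formula~\eqref{eq:22-b} of the Appendix). No gaps.
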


We begin by computing the linearization of $L$ and $N$.  Let us
introduce the new non-local scalar functions
\begin{equation}
  \label{eq:71}
  \tilde{\psi}^k = \ddx{-1}(w^i_l u^l_{x}\psi^k_i),\qquad k=1,2,3.
\end{equation}

\begin{lemma}
The linearization of $L$ and $N$ have the following expressions:
\begin{align}
  \label{eq:43}
  \ell_{L,\psi^1}(\varphi)^i = &\Big(g^{ij}_{,k}\ddx{}\psi^1_j
  + \Gamma^{ij}_{h,k}u^h_{x}\psi^1_j\Big)\varphi^k
  + \Gamma^{ij}_{h}\psi^1_j \ddx{}\varphi^h
  \\ \notag
  \ell_{N,\psi^1}(\varphi)^i = &
  (w^i_{k,l}u^k_{x}\varphi^l + w^i_{k}\ddx{}\varphi^k)
                           \tilde{\psi}^1 
  \\ \label{eq:62}
  & + w^i_ku^k_{x}\ddx{-1}\big( (w^j_{h,l}u^h_{x}\varphi^l +
  w^j_l\ddx{}\varphi^l)\psi^1_j \big)
\end{align}
\end{lemma}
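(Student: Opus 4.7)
The plan is to apply the general linearization formula \eqref{eq:17} to each of the two operators $L$ and $N$ separately, which reduces the statement to a direct bookkeeping computation once the coefficients have been correctly identified.

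For $L=g^{ij}\partial_x+\Gamma^{ij}_k u^k_x$, I would first read off the coefficients $B^{ij\sigma}$ in the notation of \eqref{eq:10}: namely $B^{ij,1}=g^{ij}(u)$ depends on the $u^k$ only, while $B^{ij,0}=\Gamma^{ij}_h(u)\,u^h_x$ depends on both $u^k$ and $u^k_x$. Substituting into the first summand of \eqref{eq:17} and using
$\partial g^{ij}/\partial u^k=g^{ij}_{,k}$, $\partial(\Gamma^{ij}_h u^h_x)/\partial u^k=\Gamma^{ij}_{h,k}u^h_x$, and $\partial(\Gamma^{ij}_h u^h_x)/\partial u^k_x=\Gamma^{ij}_k$, produces exactly the three summands on the right-hand side of \eqref{eq:43}.

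For $N=w^i_k u^k_x\,\partial_x^{-1}w^j_h u^h_x$, the point is to recognize it as a weakly nonlocal term with a single tail whose effective ``weights'' are $W^i:=w^i_k u^k_x$, so that the second and third summands of \eqref{eq:17} apply with $e^\alpha w^i_\alpha$ replaced by $W^i$. I would then compute
$\partial W^i/\partial u^l=w^i_{k,l}u^k_x$ and $\partial W^i/\partial u^l_x=w^i_l$,
giving $(\partial W^i/\partial u^k_\tau)\partial_\tau\varphi^k=w^i_{k,l}u^k_x\varphi^l+w^i_k\partial_x\varphi^k$. Substituting this into \eqref{eq:17}, together with the defining relation $\tilde\psi^1=\partial_x^{-1}(w^i_l u^l_x\psi^1_i)=\partial_x^{-1}(W^i\psi^1_i)$ from \eqref{eq:71}, the external $W^i$--factor becomes the outer $w^i_k u^k_x$ of the second line of \eqref{eq:62}, and the inner $W^j$--factor reproduces the integrand inside $\partial_x^{-1}(\cdots)$.

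There is essentially no obstacle: the computation is algorithmic and the only thing to be careful about is the chain-rule expansion of the derivatives of $W^i=w^i_k u^k_x$, where both the $u^k$-- and the $u^k_x$--dependence contribute. No integration by parts is needed at this stage, because $\partial_x^{-1}$ commutes with linearization (as already noted after \eqref{eq:17}), so the nonlocal factor $\tilde\psi^1$ passes through the calculation unchanged. The verification thus consists of matching terms, and the two displayed formulas follow.
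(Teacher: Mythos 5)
Your proposal is correct and follows exactly the route the paper intends: the lemma is a direct instantiation of the general linearization formula \eqref{eq:17}, with $B^{ij,1}=g^{ij}$, $B^{ij,0}=\Gamma^{ij}_h u^h_x$ for $L$, and the tail weights $W^i=w^i_ku^k_x$ for $N$, so that $\partial W^i/\partial u^l=w^i_{k,l}u^k_x$, $\partial W^i/\partial u^l_x=w^i_l$ and $\partial_x^{-1}(W^j\psi^1_j)=\tilde\psi^1$ reproduce \eqref{eq:43} and \eqref{eq:62} term by term. The paper gives no separate argument beyond this substitution, so nothing is missing.
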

We compute the first summand of~\eqref{eq:32}:
\begin{equation}
  \label{eq:31}
  \begin{split}
  \ell_{L,\psi^1}&(L(\psi^2))(\psi^3) =
   g^{ij}_{,k}g^{kp}\ddx{}\psi^1_j\ddx{}\psi^2_p\psi^3_i
   + g^{ij}_{,k}\Gamma^{kp}_h u^h_{x}\ddx{}\psi^1_j\psi^2_p\psi^3_i
  \\ 
  & + \Big(\Gamma^{ij}_{h,k}u^h_{x}g^{kp} + \Gamma^{ij}_{h} \ddx{}(g^{hp})
     + \Gamma^{ij}_{h}\Gamma^{hp}_{k} u^k_{x}\Big)
     \psi^1_j\ddx{}\psi^2_p\psi^3_i
  \\ 
  & + \Gamma^{ij}_{h} g^{hp}\psi^1_j\ddx{2}\psi^2_p\psi^3_i
   + \Big(\Gamma^{ij}_{h,k}u^h_{x}\Gamma^{kp}_h u^h_{x}
      + \Gamma^{ij}_{h}\ddx{}(\Gamma^{hp}_{k}u^k_{x})\Big)
      \psi^1_j\psi^2_p\psi^3_i.
      \end{split}
\end{equation}
We compute the second summand of~\eqref{eq:32}:
\begin{multline}
  \label{eq:33}
  \ell_{L,\psi^1}(N(\psi^2))(\psi^3)=
   g^{ij}_{,k} w^k_mu^m_{x}\ddx{}\psi^1_j\tilde{\psi}^2\psi^3_i
\\  + \Big(\Gamma^{ij}_{h,k}u^h_{x}w^k_mu^m_{x}
     + \Gamma^{ij}_{k}\ddx{}(w^k_hu^h_{x})\Big)
     \psi^1_j\tilde{\psi}^2\psi^3_i
   + \Gamma^{ij}_{k} w^k_hu^h_{x}w^p_lu^l_{x}
                   \psi^1_j\psi^2_p\psi^3_i.
\end{multline}
We compute the third summand of~\eqref{eq:32}. Here we integrate non-local
expressions by parts in order to concentrate integrals in expressions of the
form~\eqref{eq:71}:
\begin{equation}
  \label{eq:34}
  \begin{split}
  \ell_{N,\psi^1}&(L(\psi^2))(\psi^3) =
    (w^i_{l,k}u^l_{x}g^{kp}+ w^i_{k}\ddx{}(g^{kp})
      + w^i_{k}\Gamma^{kp}_m u^m_{x})
      \tilde{\psi}^1\ddx{}\psi^2_p\psi^3_i
  \\
  & 
  + (w^i_{l,k}u^l_{x}\Gamma^{kp}_m u^m_{x}+ w^i_{k}\ddx{}(\Gamma^{kp}_m u^m_{x}))
  \tilde{\psi}^1\psi^2_p\psi^3_i
  +w^i_{k}g^{kp}\tilde{\psi}^1\ddx{2}\psi^2_p\psi^3_i
  \\ 
  & + (- w^j_{h,k}u^h_{x}g^{kp} - w^j_k\ddx{}(g^{kp})
                   - w^j_k\Gamma^{kp}_m u^m_{x})
     \psi^1_j\ddx{}\psi^2_p\tilde{\psi}^3
  \\ 
  & + (- w^j_{h,k}u^h_{x}\Gamma^{kp}_m u^m_{x}
    - w^j_k\ddx{}(\Gamma^{kp}_m u^m_{x}))
    \psi^1_j\psi^2_p\tilde{\psi}^3
   - w^j_kg^{kp}\psi^1_j\ddx{2}\psi^2_p\tilde{\psi}^3.
\end{split}
\end{equation}
We compute the fourth summand of~\eqref{eq:32}. Here we integrate non-local
expressions by parts in order to concentrate integrals in expressions of the
form~\eqref{eq:71}:
\begin{equation}
  \label{eq:35}
  \begin{split}
  \ell_{N,\psi^1}&(N(\psi^2))(\psi^3) =
   (w^i_{k,l}u^k_{x}w^l_hu^h_{x} + w^i_{k}\ddx{}(w^k_mu^m_{x}))
     \tilde{\psi}^1\tilde{\psi}^2\psi^3_i
  \\ 
  & + w^i_{k}w^k_mu^m_{x}w^p_hu^h_{x}\tilde{\psi}^1\psi^2_p\psi^3_i
   +(- w^j_{h,l}u^h_{x}w^l_ku^k_{x}- w^j_l\ddx{}(w^l_mu^m_{x}))
                   \psi^1_j \tilde{\psi}^2\tilde{\psi}^3
  \\
  & - w^j_lw^l_mu^m_{x}w^p_hu^h_{x} \psi^1_j\psi^2_p\tilde{\psi}^3.
\end{split}
\end{equation}
The three-vector~\eqref{eq:32} can be written, after adding the cyclically
permuted summands, as the sum $T_l+T_n$, where $T_l$ is the local part and
$T_n$ is the non-local part.

\subsection{Calculation of the reduced form}

Now, we fix the indices $1$, $2$, $3$ and we bring the nonlocal part to the
normal form with respect to the three ordered indices. This means that the
terms which are quadratic in the nonlocal expressions~\eqref{eq:71} shall be
preserved, while terms which are linear in the nonlocal expressions should be
brought to one of the following forms by integrating by parts:
\begin{equation}
  \label{eq:38}
  \tilde{\psi}^1\ddx{k}\psi^2_p\psi^3_i,\quad 
  \tilde{\psi}^2\ddx{k}\psi^3_p\psi^1_i,\quad 
  \tilde{\psi}^3\ddx{k}\psi^1_p\psi^2_i. 
\end{equation}
For example,
$g^{ij}_{,k} w^k_mu^m_{x} \tilde{\psi}^2\psi^3_i\ddx{}\psi^1_j$ must be
replaced by
$ - \ddx{}\big(g^{ij}_{,k} w^k_mu^m_{x} \tilde{\psi}^2\psi^3_i\big)\psi^1_j$
(of course, up to a total divergence). After the above computational step we
can write the final form of the nonlocal part of the three-vector. Note that
$T_n$ acquired some local terms at the end of the first step of the
algorithm. We introduce the notation $T_n=T_N + T_{nL}$, where $T_N$ is the
non-local part and $T_{nL}$ is the local part of $T_n$ after the first step of
the algorithm. We have, after collecting like terms:
\begin{align}
  \label{eq:42}
  T_N=&\notag
        \\
  \begin{split}
  & \Big(- \ddx{}\big(g^{ij}_{,k} w^k_mu^m_{x}\big)
          + \Gamma^{ij}_{h,k}u^h_{x}w^k_mu^m_{x}
    + \Gamma^{ij}_{k}\ddx{}(w^k_hu^h_{x})
    \\
    & \hphantom{ciao}
    + w^j_{l,k}u^l_{x}\Gamma^{ki}_m u^m_{x}
    + w^j_{k}\ddx{}(\Gamma^{ki}_m u^m_{x})\Big)
    \\
    & \hphantom{ciao}
    -\ddx{}\big( - w^i_{h,k}u^h_{x}g^{kj} - w^i_k\ddx{}(g^{kj})
    - w^i_k\Gamma^{kj}_m u^m_{x}\big)
    \\
    & \hphantom{ciao}
    - w^i_{h,k}u^h_{x}\Gamma^{kj}_m u^m_{x}
    - w^i_k\ddx{}(\Gamma^{kj}_m u^m_{x}) + \ddx{2}\big(- w^i_kg^{kj}\big)
    \\
    & \hphantom{ciao}
    + w^j_{k}w^k_mu^m_{x}w^i_hu^h_{x}
    - w^i_lw^l_mu^m_{x}w^j_hu^h_{x} \Big)
    \tilde{\psi}^2\psi^3_i\psi^1_j
  \end{split}
  \\ 
  \begin{split}
  & \Big(- g^{ij}_{,k} w^k_mu^m_{x} +
     w^j_{l,k}u^l_{x}g^{ki}+ w^j_{k}\ddx{}(g^{ki})
     + w^j_{k}\Gamma^{ki}_m u^m_{x} 
  \\
  &\hphantom{ciao}
  + w^i_{h,k}u^h_{x}g^{kj} + w^i_k\ddx{}(g^{kj})
  + w^i_k\Gamma^{kj}_m u^m_{x}
  \\
  &\hphantom{ciao}
  + 2\ddx{}\big(- w^i_kg^{kj}\big)\Big)
           \tilde{\psi}^2\ddx{}\psi^3_i\psi^1_j
  \end{split}
  \\
  & 
  + (w^j_{k}g^{ki}- w^i_kg^{kj})\tilde{\psi}^2\ddx{2}\psi^3_i\psi^1_j
\end{align}
plus a cyclic permutation of the above terms.
Moreover:
\begin{align}
  \label{eq:45}
  T_{nL} =&\notag
            \\
  \begin{split}
  & \Big(- g^{ij}_{,k} w^k_mu^m_{x} w^p_lu^l_x  - g^{jp}_{,k} w^k_mu^m_{x} w^i_lu^l_x 
    - g^{pi}_{,k} w^k_mu^m_{x} w^j_lu^l_x
  \\
  &\hphantom{ciao}
  + (w^j_{h,k}u^h_{x}g^{kp} + w^j_k\ddx{}(g^{kp})
    + w^j_k\Gamma^{kp}_m u^m_{x}) w^i_lu^l_x
  \\
  &\hphantom{ciao}
    - 2\ddx{}\big(w^j_kg^{kp}\big)w^i_lu^l_x - w^j_kg^{kp}\ddx{}(w^i_lu^l_x)
  \\
  &\hphantom{ciao}
  + (w^p_{h,k}u^h_{x}g^{ki} + w^p_k\ddx{}(g^{ki})
    + w^p_k\Gamma^{ki}_m u^m_{x})w^j_lu^l_x
  \\
  &\hphantom{ciao}
    + 2 \ddx{}\big(- w^p_kg^{ki}\big)w^j_lu^l_x
    + (- w^p_kg^{ki})\ddx{}(w^j_lu^l_x)
  \\
  &\hphantom{ciao}
  + (w^i_{h,k}u^h_{x}g^{kj} + w^i_k\ddx{}(g^{kj})
    + w^i_k\Gamma^{kj}_m u^m_{x})w^p_lu^l_x
  \\
  &\hphantom{ciao}
  + (- w^i_kg^{kj})\ddx{}(w^p_lu^l_x) + 2\ddx{}\big(- w^i_kg^{kj}\big)w^p_lu^l_x
  \Big)
  \\
  &\hphantom{ciao}
  \psi^1_j\psi^2_p\psi^3_i
\end{split}
  \\ 
          & \big(- 2 w^j_kg^{kp}w^i_lu^l_x - w^p_kg^{ki}w^j_lu^l_x\big)
            \ddx{}\psi^1_j\psi^2_p\psi^3_i
\end{align}
plus a cyclic permutation of the last summand.

Let us introduce the notation $T_L=T_l + T_{nL}$. We shall bring each summand
of $T_L$ to the canonical form
\begin{equation}
  \label{eq:46}
  c^{jpi}\ddx{k}\psi^1_j\ddx{h}\psi^2_p \psi^3_i,
\end{equation}
where $c^{jpi}$ are coefficient functions, using integration by parts on
summands that contain $\ddx{l}\psi^3_i$ with $l>0$. We have:
\begin{align*}
  \label{eq:367}
  T_L = &
  \\ 
     & \Big(g^{ij}_{,k}g^{kp} - g^{jp}_{,k}g^{ki}
          + 2\Gamma^{jp}_{h} g^{hi} - g^{pi}_{,k}g^{kj}\Big)
       \ddx{}\psi^1_j\ddx{}\psi^2_p\psi^3_i
  \\ 
        & + \Big( g^{ij}_{,k}\Gamma^{kp}_h u^h_{x}
          - \Gamma^{jp}_{h,k}u^h_{x}g^{ki}
          - \Gamma^{jp}_{h} \ddx{}(g^{hi})
          - \Gamma^{jp}_{h}\Gamma^{hi}_{k} u^k_{x}
          \\
          &\hphantom{ciao}
          + 2\ddx{}\big(\Gamma^{jp}_{h} g^{hi}\big)
          -\ddx{}\big(g^{pi}_{,k}g^{kj}\big)
          - g^{pi}_{,k}\Gamma^{kj}_h u^h_{x}
          + \Gamma^{pi}_{h,k}u^h_{x}g^{kj}
          \\
          &\hphantom{ciao}
          + \Gamma^{pi}_{h} \ddx{}(g^{hj})
          + \Gamma^{pi}_{h}\Gamma^{hj}_{k} u^k_{x}
          \\
          &\hphantom{ciao}
          - 2 w^j_kg^{kp}w^i_lu^l_x - w^p_kg^{ki}w^j_lu^l_x
          + w^j_kg^{kp}w^i_lu^l_x
          + 2w^i_kg^{kj}w^p_lu^l_x\Big)
          \\
          &\hphantom{ciao}
          \ddx{}\psi^1_j\psi^2_p\psi^3_i
  \\ 
  & + \Big(
    \Gamma^{ij}_{h,k}u^h_{x}g^{kp} + \Gamma^{ij}_{h} \ddx{}(g^{hp})
    + \Gamma^{ij}_{h}\Gamma^{hp}_{k} u^k_{x}
    - \ddx{}\Big(g^{jp}_{,k}g^{ki}\Big)
    \\
    & \hphantom{ciao}
    + g^{jp}_{,k}\Gamma^{ki}_h u^h_{x}
    - \Gamma^{jp}_{h,k}u^h_{x}g^{ki}
    - \Gamma^{jp}_{h} \ddx{}(g^{hi})
    - \Gamma^{jp}_{h}\Gamma^{hi}_{k} u^k_{x}
    \\
    & \hphantom{ciao}
    + 2\ddx{}\big(\Gamma^{jp}_{h} g^{hi}\big)
    - g^{pi}_{,k}\Gamma^{kj}_h u^h_{x}
    \\
    & \hphantom{ciao}
     + w^j_kg^{kp}w^i_lu^l_x
    + 2w^i_kg^{kj}w^p_lu^l_x
    - 2w^p_kg^{ki}w^j_lu^i_x
    - w^i_kg^{kj}w^p_lu^l_x
    \Big)
    \\
    & \hphantom{ciao}
    \psi^1_j\ddx{}\psi^2_p\psi^3_i
  \\ 
        & + \Big(\Gamma^{ij}_{h} g^{hp} - g^{jp}_{,k}g^{ki}
          + \Gamma^{jp}_{h} g^{hi}\Big)
          \psi^1_j\ddx{2}\psi^2_p\psi^3_i
  \\ 
    & + \Big(
    \Gamma^{ij}_{h,k}u^h_{x}\Gamma^{kp}_l u^l_{x}
    + \Gamma^{ij}_{h}\ddx{}(\Gamma^{hp}_{k}u^k_{x})
    \\
    &\hphantom{ciao}
    -\ddx{}\big(\Gamma^{jp}_{h,k}u^h_{x}g^{ki}
          + \Gamma^{jp}_{h} \ddx{}(g^{hi})
    + \Gamma^{jp}_{h}\Gamma^{hi}_{k} u^k_{x}\big)
    + \ddx{2}\big(\Gamma^{jp}_{h} g^{hi}\big)
    \\
    &\hphantom{ciao}
    + \Gamma^{jp}_{h,k}u^h_{x}\Gamma^{ki}_l u^l_{x}
    + \Gamma^{jp}_{h}\ddx{}(\Gamma^{hi}_{k}u^k_{x})
    -\ddx{}\big(g^{pi}_{,k}\Gamma^{kj}_h u^h_{x}\big)
    \\
    &\hphantom{ciao}
    + \Gamma^{pi}_{h,k}u^h_{x}\Gamma^{kj}_l u^l_{x}
    + \Gamma^{pi}_{h}\ddx{}(\Gamma^{hj}_{k}u^k_{x})
    \\
    &\hphantom{ciao}
    + \Gamma^{ij}_{k} w^k_hu^h_{x}w^p_lu^l_{x}
    + \Gamma^{jp}_{k} w^k_hu^h_{x}w^i_lu^l_{x}
    + \Gamma^{pi}_{k} w^k_hu^h_{x}w^j_lu^l_{x}
    \\
    &\hphantom{ciao}
    - g^{ij}_{,k} w^k_mu^m_{x} w^p_lu^l_x
    - g^{jp}_{,k} w^k_mu^m_{x} w^i_lu^l_x
    - g^{pi}_{,k} w^k_mu^m_{x} w^j_lu^l_x
  \\
  &\hphantom{ciao}
  + (w^j_{h,k}u^h_{x}g^{kp} + w^j_k\ddx{}(g^{kp})
    + w^j_k\Gamma^{kp}_m u^m_{x}) w^i_lu^l_x
  \\
  &\hphantom{ciao}
    - 2\ddx{}\big(w^j_kg^{kp}\big)w^i_lu^l_x - w^j_kg^{kp}\ddx{}(w^i_lu^l_x)
  \\
  &\hphantom{ciao}
  + (w^p_{h,k}u^h_{x}g^{ki} + w^p_k\ddx{}(g^{ki})
    + w^p_k\Gamma^{ki}_m u^m_{x})w^j_lu^l_x
  \\
  &\hphantom{ciao}
    + 2 \ddx{}\big(- w^p_kg^{ki}\big)w^j_lu^l_x
    + (- w^p_kg^{ki})\ddx{}(w^j_lu^l_x)
  \\
  &\hphantom{ciao}
  + (w^i_{h,k}u^h_{x}g^{kj} + w^i_k\ddx{}(g^{kj})
    + w^i_k\Gamma^{kj}_m u^m_{x})w^p_lu^l_x
  \\
  &\hphantom{ciao}
  + (- w^i_kg^{kj})\ddx{}(w^p_lu^l_x) + 2\ddx{}\big(- w^i_kg^{kj}\big)w^p_lu^l_x
  \\
  &\hphantom{ciao}
    - \ddx{}\Big(- w^j_kg^{kp}w^i_lu^l_x
          - 2w^i_kg^{kj}w^p_lu^l_x\Big)
  \\
  &\hphantom{ciao}
          \psi^1_j\psi^2_p\psi^3_i
   \\ 
        & + \Big(\Gamma^{jp}_{h} g^{hi}- g^{pi}_{,k}g^{kj}+
          \Gamma^{pi}_{h}g^{hj}
          \Big)
          \ddx{2}\psi^1_j\psi^2_p\psi^3_i
\end{align*}

\subsection{The conditions}
\label{sec:conditions-1}

The vanishing of coefficients of the $3$-vector $T$ yields the conditions on
$P$ to be Hamiltonian. Below we list the basic elements of $T$ and the
conditions that arise from their coefficients. We assume the condition of
skew-adjointness of $P$.
\begin{description}
\item[$\psi^1_j\ddx{2}\psi^2_p\psi^3_i$:] the coefficient is
  \begin{equation}
    \label{eq:47}
    \Gamma^{ij}_{h} g^{hp} - g^{jp}_{,k}g^{ki} + \Gamma^{jp}_{h} g^{hi}
  \end{equation}
  and corresponds to the coefficient of $\delta''_{xy}\delta_{xz}$ and similar
  terms in Section~\ref{sec:conditions}. Its vanishing is equivalent to the
  condition $\Gamma^{jp}_{h} g^{hi} = \Gamma^{ip}_{h} g^{hj}$.
\item[$\ddx{}\psi^1_j\ddx{}\psi^2_p\psi^3_i$:] the coefficient vanish on
  account of the above condition.
\item[$\tilde{\psi}^1\ddx{2}\psi^2_p\psi^3_i$:] the coefficient is
  \begin{equation}
    \label{eq:49}
    w^i_kg^{kp} - w^p_k g^{ki}
  \end{equation}
  and corresponds to the coefficients of $\delta''_{xz}\nu_{xy}$ and similar
  terms in Section~\ref{sec:conditions}.
\item[$\psi^1_j\psi^2_p\psi^3_i$:] This coefficient is a differential
  polynomial; the coefficient of $u^k_{xx}$ reduces to
  \begin{equation}
    \label{eq:50}
    (\Gamma^{jp}_{h,k} - \Gamma^{jp}_{k,h})g^{hi} + \Gamma^{ij}_h\Gamma^{hp}_k
    - \Gamma^{ip}_h\Gamma^{hj}_k + g^{hi}(w^j_hw^p_k - w^p_hw^j_k)
  \end{equation}
  using \eqref{eq:49}. This corresponds to the coefficient of
  $u^k_{xx}\delta_{xy}\delta_{xz}$ in Section~\ref{sec:conditions}.
\item[$\tilde{\psi}^2\psi^3_i\psi^1_j$:] This coefficient is a differential
  polynomial; the coefficient of $u^m_{xx}$ reduces to
  \begin{equation}
    \label{eq:51}
    -g^{ij}_{,k}w^k_m + \Gamma^{ij}_kw^k_m +w^j_k\Gamma^{ki}_m
    + w^i_{m,k}g^{kj} - w^i_{k,m}g^{kj},
  \end{equation}
  and corresponds to the coefficient of $u^m_{xx}\nu_{xy}\delta_{xz}$ in
  Section~\ref{sec:conditions}.  The coefficient is equal to
  $g^{kj}(\nabla_k w^i_m - \nabla_m w^i_k)$.
\end{description}
The correspondence between the coefficients of the three-vector in the language
of operators and of distributions extends to all remaining terms; there is no
need to repeat the computation that shows that all other coefficients vanish on
account of the above conditions.

\section{Weakly nonlocal PBHT and Poisson Vertex Algebras}

In this section we will use the master formula to compute the skewsymmetry
condition and the PVA-Jacobi identity for the $\lambda$ bracket
\begin{equation}\label{eq:def}
\{u^i_{\lambda}u^j\}_P= g^{ji}\lambda + \Gamma^{ji}_s u^s_x
+ w^j_mu^m_x(\lambda+\d)^{-1}w^i_nu^n_x,
\end{equation}
corresponding to the weakly non-local Hamiltonian operator \eqref{eq:29}. As
before, we split the operator in the local and nonlocal parts
\begin{align}\label{eq:lambdaL}
\{u^i_{\lambda}u^j\}_L&= g^{ji}\lambda + \Gamma^{ji}_s u^s_x,
\\
\label{eq:lambdaN}
\{u^i_{\lambda}u^j\}_N&= w^j_mu^m_x(\lambda+\d)^{-1}w^i_nu^n_x.
\end{align}

Enforcing Property 5 of Section \ref{sec:nPVA1} on the two $\lambda$ brackets \eqref{eq:lambdaL} and
\eqref{eq:lambdaN} gives the conditions for the corresponding operator to be
skewsymmetric.

Indeed, for the local part, we have
\begin{equation}
\label{eq:22}
\{u^i_\lambda u^j\}^L=g^{ji}\lambda+\Gamma^{ji}_su^s_{x}=-{}_\to\{u^j_{-\lambda-\d}u^i\}^L=g^{ij}\lambda+\d_s g^{ij}u^s_{x}-\Gamma^{ij}_su^s_{x},
\end{equation}
which implies the conditions~\eqref{eq:52}, \eqref{eq:56}, and it is easy to
prove that the nonlocal part is skewsymmetric by construction.

\subsection{Computations with the master formula}
For convenience, we split the PVA-Jacobi identity on the generators -- defined in Section \ref{sec:PVA-voc} -- in the four parts
\begin{equation*}
J^{ijk}_{\lambda,\mu}(P,P)=J^{ijk}_{\lambda,\mu}(L,L)+J^{ijk}_{\lambda,\mu}(N,N)+J^{ijk}_{\lambda,\mu}(N,L)+J^{ijk}_{\lambda,\mu}(L,N),
\end{equation*}
where the last two terms correspond to the Schouten bracket $[L,N]$.

The purely local part $J^{ijk}_{\lambda,\mu}(L,L)$ is a straightforward application of the master formula:
\begin{equation}\label{eq:LL1}
\begin{split}
\{u^i_{\lambda}\{u^j_\mu u^k\}^L\}^L&=g^{li}\d_l g^{kj}\lambda\mu+\d_lg^{kj}\Gamma^{li}_su^s_x\mu+g^{li}\d_l\Gamma^{kj}_su^s_x\lambda\\
&+\Gamma^{li}_s\d_l\Gamma^{kj}_tu^s_xu^t_x+g^{li}\Gamma^{kj}_l\lambda^2+\d_sg^{li}\Gamma^{kj}_lu^s_x\lambda\\
&+\Gamma^{kj}_l\Gamma^{li}_{s}u^s_x\lambda+\Gamma^{kj}_l\d_s\Gamma^{li}_tu^t_xu^s_x+\Gamma^{kj}_l\Gamma^{li}_su^s_{xx}
\end{split}
\end{equation}
\begin{equation}\label{eq:LL2}
\begin{split}
\{u^j_{\mu}\{u^i_\mu u^k\}^L\}^L&=g^{lj}\d_l g^{ki}\lambda\mu+\d_lg^{ki}\Gamma^{lj}_su^s_x\lambda+g^{lj}\d_l\Gamma^{ki}_su^s_x\mu\\
&+\Gamma^{lj}_s\d_l\Gamma^{ki}_tu^s_xu^t_x+g^{lj}\Gamma^{ki}_l\mu^2+\d_sg^{lj}\Gamma^{ki}_lu^s_x\mu\\
&+\Gamma^{ki}_l\Gamma^{lj}_{s}u^s_x\mu+\Gamma^{ki}_l\d_s\Gamma^{lj}_tu^t_xu^s_x+\Gamma^{ki}_l\Gamma^{lj}_su^s_{xx}
\end{split}
\end{equation}
\begin{equation}\label{eq:LL3}
\begin{split}
\{\{u^i_{\lambda}u^j\}^L_{\lambda+\mu}u^k\}^L&=g^{kl}\d_lg^{ji}\lambda^2+g^{kl}\d_lg^{ji}\lambda\mu+g^{kl}\d_{sl}g^{ji}u^s_x\lambda\\
&+g^{kl}\d_l\Gamma^{ji}_su^s_x\lambda+g^{kl}\d_l\Gamma^{ji}_su^s_x\mu+g^{kl}\d_{sl}\Gamma^{ji}_tu^s_xu^t_x\\
&+g^{kl}\d_{l}\Gamma^{ji}_su^s_{xx}-g^{kl}\Gamma^{ji}_l\lambda^2-g^{kl}\Gamma^{ji}_l\mu^2\\
&-2g^{kl}\Gamma^{ji}_l\lambda\mu-2g^{kl}\d_s\Gamma^{ji}_lu^s_x\lambda-2g^{kl}\d_s\Gamma^{ji}_lu^s_x\mu\\
&-g^{kl}\d_s\Gamma^{ji}_lu^s_{xx}-g^{kl}\d_{st}\Gamma^{ji}_lu^s_xu^t_x+\Gamma^{kl}_s\d_lg^{ji}u^s_x\lambda\\
&+\Gamma^{kl}_s\d_l\Gamma^{ji}_tu^s_xu^t_x-\Gamma^{kl}_s\Gamma^{ji}_lu^s_x\lambda-\Gamma^{kl}_s\Gamma^{ji}_lu^s_x\mu\\
&-\Gamma^{kl}_s\d_t\Gamma^{ji}_lu^s_xu^t_x
\end{split}
\end{equation}
where all the monomials are of the form $\lambda^p\mu^q$ with $p,q\geq0$.

Computing the expressions with nonlocal terms is more complicated. However, it is possible to rely on Leibniz's and sesquilinearity properties of the $\lambda$ brackets to split the problem into smaller chunks. The basic observation, that can be proved by expanding the $(\lambda+\d)^{-1}$ expression, is that
$$
\{f_{\mu}(\lambda+\d)^{-1}g\}=(\lambda+\mu+\d)^{-1}\{f_{\mu}g\}.
$$
Let us start with $J^{ijk}_{\lambda,\mu}(N,L)$. Combining this with the left
and right Leibnitz properties we get
\begin{align}\label{eq:NL1}
\{u^i_{\lambda}&\{u^j_\mu u^k\}^N\}^L=\{u^i_\lambda
                 w^k_mu^m_x(\mu+\d)^{-1}w^j_nu^n_x\}^L \notag
\\
&=\{u^i_\lambda w^k_mu^m_x\}^L[(\mu+\d)^{-1}w^j_nu^n_x]+w^k_mu^m_x\{u^i_\lambda
     (\mu+\d)^{-1}w^j_nu^n_x\}^L\notag
  \\
&=\{u^i_\lambda w^k_mu^m_x\}^L[(\mu+\d)^{-1}w^j_nu^n_x]
  +w^k_mu^m_x(\lambda+\mu+\d)^{-1}\{u^i_\lambda w^j_nu^n_x\}^L\notag
  \\
               &=[(\mu+\d)^{-1}w^j_nu^n_x]\left(g^{li}w^k_l\lambda^2\right.
                 \notag
  \\
  &\qquad+g^{li}\d_lw^k_su^s_x\lambda+w^k_l\d_sg^{li}u^s_x\lambda
  +w^k_l\Gamma^{li}_su^s_x\lambda\notag
  \\
  &\qquad+\left.\Gamma^{li}_s\d_lw^k_tu^s_xu^t_x
  +w^k_l\d_t\Gamma^{li}_su^s_xu^t_x+w^k_l\Gamma^{li}_su^s_{xx}\right)\notag
  \\
  &+w^k_mu^m_x(\lambda+\mu+\d)^{-1}\left(w^j_lg^{li}\lambda^2\right.\notag
  \\
  &\qquad+g^{li}\d_lw^j_su^s_x\lambda+w^j_l\d_sg^{li}u^s_x\lambda
    +w^j_l\Gamma^{li}_su^s_x\lambda\notag
  \\
  &\qquad\left.+\d_lw^j_s\Gamma^{li}_tu^s_xu^t_x
  +w^j_l\d_t\Gamma^{li}_su^s_xu^t_x+w^j_l\Gamma^{li}_su^s_{xx}\right),
\\
  \label{eq:NL2}
\{u^j_{\lambda}&\{u^i_\lambda u^k\}^N\}^L=
   [(\lambda+\d)^{-1}w^i_nu^n_x]\left(g^{lj}w^k_l\mu^2\right.\notag
  \\
               &\qquad+g^{lj}\d_lw^k_su^s_x\mu+w^k_l\d_sg^{lj}u^s_x\mu
                 +w^k_l\Gamma^{lj}_su^s_x\mu\notag
  \\
&\qquad\left.+\Gamma^{lj}_s\d_lw^k_tu^s_xu^t_x+w^k_l\d_t\Gamma^{lj}_su^s_xu^t_x+
       w^k_l\Gamma^{lj}_su^s_{xx}\right)\notag
  \\
  &+w^k_mu^m_x(\lambda+\mu+\d)^{-1}\left(\d_lw^i_s\Gamma^{lj}_tu^s_xu^t_x
      \right.\notag
  \\
&\qquad\left.+g^{lj}\d_lw^i_su^s_x\mu+w^i_l(\mu+\d)\left(g^{lj}\mu+\Gamma^{lj}_su^s_x\right)\right),
\\
\label{eq:NL3}
  \{\{u^i_{\lambda}&u^j\}^N_{\lambda+\mu}u^k\}^L=
   \{w^j_mu^m_x(\lambda+\d)^{-1}w^i_nu^n_x{}_{\lambda+\mu}u^k\}^L\notag
  \\
  &=\{w^j_mu^m_x{}_{\lambda+\mu+\d}u^k\}^L(\lambda+\d)^{-1}w^i_nu^n_x\notag
  \\
  &\quad+\{w^i_nu^n_x{}_{\lambda+\mu+\d}u^k\}^L(\cancel{\lambda}
                 -\cancel{\lambda}-\mu-\d)^{-1}w^j_mu^m_x\notag
  \\
  &=g^{kl}(\lambda+\mu+\d)\d_lw^j_mu^m_x(\lambda+\d)^{-1}w^i_nu^n_x\notag
  \\
  &\quad+\Gamma^{kl}_su^s_x\d_lw^j_tu^t_x(\lambda+\d)^{-1}w^i_nu^n_x\notag
  \\
  &\quad-g^{kl}(\lambda+\mu+\d)^2w^j_l(\lambda+\d)^{-1}w^i_nu^n_x\notag
  \\
  &\quad-\Gamma^{kl}_su^s_x(\lambda+\mu+\d)w^j_l(\lambda+\d)^{-1}w^i_nu^n_x\notag
  \\
               &-\Big(i\leftrightarrow j,\lambda\leftrightarrow \mu\Big)\notag
  \\
  &=g^{kl}\d_lw^j_su^s_xu^t_x-g^{kl}w^j_lw^i_su^s_x\lambda
                 -g^{kl}w^j_lw^i_su^s_x\mu \notag
  \\
  &\quad-g^{kl}\d(w^j_lw^i_su^s_x)-g^{kl}w^j_lw^i_su^s_x\mu
   -g^{kl}w^j_l\mu^2[(\lambda+\d)^{-1}w^i_nu^n_x] \notag
  \\
  &\quad-g^{kl}\d_sw^j_lu^s_x\mu[(\lambda+\d)^{-1}w^i_nu^n_x]
    -g^{kl}\d_sw^j_lw^i_tu^s_xu^t_x\notag
  \\
  &\quad-g^{kl}\d_sw^j_lu^s_x\mu[(\lambda+\d)^{-1}w^i_nu^n_x]
                 -g^{kl}\d^2w^j_l[(\lambda+\d)^{-1}w^i_nu^n_x]\notag
  \\
  &\quad-\Gamma^{kl}_sw^j_lw^i_tu^s_xu^t_x
    +g^{kl}\d_lw^j_su^s_x\mu[(\lambda+\d)^{-1}w^i_nu^n_x]
    \notag
  \\
  &\quad+g^{kl}\d(\d_lw^j_mu^m_x)[(\lambda+\d)^{-1}w^i_nu^n_x]
    +\Gamma^{kl}_s\d_lw^j_tu^s_xu^t_x[(\lambda+\d)^{-1}w^i_nu^n_x]
    \notag
  \\
  &\quad-\Gamma^{kl}_sw^j_lu^s_x\mu[(\lambda+\d)^{-1}w^i_nu^n_x]
    -\Gamma^{kl}_s\d_tw^j_lu^s_xu^t_x[(\lambda+\d)^{-1}w^i_nu^n_x]
    \notag
  \\
  &-\Big(i\leftrightarrow j,\lambda\leftrightarrow \mu\Big).
\end{align}
The notation $\big(i\leftrightarrow j,\lambda\leftrightarrow \mu\big)$ we used
in Equation \eqref{eq:NL3} means that all the terms are to be replaced with the
ones obtained by switching the corresponding indices and parameters. The
expressions of the form $[(\lambda+\d)^{-1}A]$ enclosed within square brackets
denote terms on which derivation operators ``from outside'' do not act and
containing derivations which do not act ``on the outside''.

The computation of the terms of $J^{ijk}_{\lambda,\mu}(L,N)$ is straightforward
for the first two addends
\begin{align}\label{eq:LN1}
\{u^i_{\lambda}&\{u^j_{\mu}u^k\}^L\}^N=\{u^i_\lambda
g^{kj}\}^N\mu+\{u^i_\lambda \Gamma^{kj}_su^s_x\}^N\notag
\\
&=\d_lg^{kj}w^l_su^s_x\mu[(\lambda+\d)^{-1}w^i_nu^n_x]+\d_l\Gamma^{kj}_su^s_xw^l_tu^t_x[(\lambda+\d)^{-1}w^i_nu^n_x]\notag
\\
&\quad+\Gamma^{kj}(\lambda+\d)w^l_su^s_x(\lambda+\d)^{-1}w^i_nu^n_x
\notag
\\
&=\d_lg^{kj}w^l_su^s_x\mu[(\lambda+\d)^{-1}w^i_nu^n_x]+\d_l\Gamma^{kj}_su^s_xw^l_tu^t_x[(\lambda+\d)^{-1}w^i_nu^n_x]
\notag
\\
&\quad+\Gamma^{kj}_lw^l_sw^i_tu^s_xu^t_x+
\Gamma^{kj}_l\d_tw^l_su^s_xu^t[(\lambda+\d)^{-1}w^i_nu^n_x]
\notag
\\
&\quad+\Gamma^{kj}_lw^l_su^s_{xx}[(\lambda+\d)^{-1}w^i_nu^n_x],
\end{align}
\begin{align}\label{eq:LN2}
  \{u^j_{\mu}&\{u^i_{\lambda}u^k\}^L\}^N
  =\{u^j_\mu g^{ki}\}^N\lambda+\{u^j_\mu \Gamma^{ki}_su^s_x\}^N\notag
  \\
  &=\d_lg^{ki}w^l_su^s_x\lambda[(\mu+\d)^{-1}w^j_nu^n_x]
  +\d_l\Gamma^{ki}_su^s_xw^l_tu^t_x[(\mu+\d)^{-1}w^j_nu^n_x]
  \notag
  \\
  &\quad+\Gamma^{ki}_lw^l_sw^j_tu^s_xu^t_x+
  \Gamma^{ki}_l\d_tw^l_su^s_xu^t[(\mu+\d)^{-1}w^j_nu^n_x]
  \notag
  \\
&\quad+\Gamma^{ki}_lw^l_su^s_{xx}[(\mu+\d)^{-1}w^j_nu^n_x].
\end{align}
In the computation of the third one, we exploit the identity
$(A+\d)^{-1}B(A+\d)C=(A+\d)^{-1}(A+\d)BC-(A+\d)^{-1}[\d B]C$
\begin{align}\label{eq:LN3}
\{\{&u^i_\lambda
u^j\}^L_{\lambda+\mu}u^k\}^N=\{g^{ji}_{\lambda+\mu}u^k\}^N\lambda+\{\Gamma^{ji}_su^s_x{}_{\lambda+\mu}u^k\}
\notag
\\
&=w^k_nu^n_x(\lambda+\mu+\d)^{-1}w^l_su^s_x\d_lg^{ji}\lambda+w^k_nu^n_x(\lambda+\mu+\d)^{-1}w^l_su^s_x\d_l\Gamma^{ji}_tu^t_x
\notag
\\
&\quad-w^k_nu^n_x(\lambda+\mu+\d)^{-1}w^l_su^s_x(\lambda+\mu+\d)\Gamma^{ji}_l
\notag
\\
&=-\Gamma^{ji}_lw^k_sw^l_tu^s_xu^t_x
\notag
\\
&\quad+w^k_nu^n_x(\lambda+\mu+\d)^{-1}\left(\d_lg^{ji}w^l_su^s_x\lambda+\d_l\Gamma^{ji}_sw^l_tu^s_xu^t_x\right.
\notag
\\
&\qquad\left.+\d_sw^l_t\Gamma^{ji}_lu^t_xu^s_x+w^l_s\Gamma^{ji}_lu^s_{xx}\right).
\end{align}
We compute now the expression for $J^{ijk}_{\lambda,\mu}(N,N)$. We have
\begin{align}\label{eq:NN1}
  \{u^i_{\lambda}&\{u^j_{\mu}u^k\}^N\}^N=\{u^i_{\lambda}w^k_mu^m_x(\mu+\d)^{-1}w^j_nu^n_x\}^N
  \notag
  \\
&=w^k_mu^m_x(\lambda+\mu+\d)^{-1}\{u^i_\lambda
w^j_nu^n_x\}^N+[(\mu+\d)^{-1}w^j_nu^n_x]\{u^i_{\lambda}w^k_mu^m_x\}^N
\notag
\\
&=w^k_mu^m_x(\lambda+\mu+\d)^{-1}\left(\d_lw^j_su^s_xw^l_tu^t_x(\lambda+\d)^{-1}w^i_nu^n_x\right.
\notag
\\
&\qquad\left.+w^j_l(\lambda+\d)w^l_tu^t_x(\lambda+\d)^{-1}w^i_nu^n_x\right)
\notag
\\
&\quad+[(\mu+\d)^{-1}w^j_nu^n_x]\d_lw^k_su^s_xw^l_tu^t_x[(\lambda+\d)^{-1}w^i_nu^n_x]
\notag
\\
&\quad+[(\mu+\d)^{-1}w^j_nu^n_x]w^k_l(\lambda+\d)w^l_su^s_x(\lambda+\d)^{-1}w^i_nu^n_x
\notag
\\
&=w^k_mu^m_x(\lambda+\mu+\d)^{-1}\left(w^j_lw^l_sw^i_tu^t_xu^s_x\right)+
\notag
\\
&+\quad
w^k_mu^m_x(\lambda+\mu+\d)^{-1}\left(\left(\d_lw^j_sw^l_tu^s_xu^t_x+w^j_l\d_sw^l_tu^s_xu^t_x\right.\right.
\notag
\\
&\qquad\left.\left.+w^j_lw^l_su^s_{xx}\right)(\lambda+\d)^{-1}w^i_nu^n_x\right)+[(\mu+\d)^{-1}w^j_nu^n_x]w^k_lw^l_sw^i_tu^s_xu^t_x
\notag
\\
&\quad+[(\mu+\d)^{-1}w^j_nu^n_x][(\lambda+\d)^{-1}w^i_nu^n_x]\left(\d_lw^k_sw^l_tu^s_xu^t_x\right.
\notag
\\
&\qquad\left.+w^k_l\d_tw^l_su^s_xu^t+w^k_lw^l_su^s_{xx}\right)
\end{align}
\begin{align}\label{eq:NN2}
  \{u^j_{\mu}&\{u^i_{\lambda}u^k\}^N\}^N=w^k_mu^m_x(\lambda+\mu+\d)^{-1}\left(w^i_lw^l_sw^j_tu^t_xu^s_x\right)+
  \notag
  \\
&+\quad
w^k_mu^m_x(\lambda+\mu+\d)^{-1}\left(\left(\d_lw^i_sw^l_tu^s_xu^t_x+w^i_l\d_sw^l_tu^s_xu^t_x\right.\right.
\notag
\\
&\qquad\left.\left.+w^i_lw^l_su^s_{xx}\right)(\mu+\d)^{-1}w^j_nu^n_x\right)+[(\lambda+\d)^{-1}w^i_nu^n_x]w^k_lw^l_sw^j_tu^s_xu^t_x
\notag
\\
&\quad+[(\mu+\d)^{-1}w^j_nu^n_x][(\lambda+\d)^{-1}w^i_nu^n_x]\left(\d_lw^k_sw^l_tu^s_xu^t_x\right.
\notag
\\
&\qquad\left.+w^k_l\d_tw^l_su^s_xu^t+w^k_lw^l_su^s_{xx}\right)
\end{align}
\begin{align}\label{eq:NN3}
  \{\{u^i_{\mu}&u^j\}^N_{\lambda+\mu}u^k\}^N=\{w^j_mu^m_x(\lambda+\d)^{-1}w^i_nu^n_x{}_{\lambda+\mu}u^k\}^N\notag
  \\
  &=\{w^j_m u^m_x{}_{\lambda+\mu+\d}u^k\}^N(\lambda+\d)^{-1}w^i_nu^n_x\notag
  \\
  &\quad+\{(\lambda+\d)^{-1}w^i_n u^n_x{}_{\lambda+\mu+\d}u^k\}^Nw^j_mu^m_x
                 \notag
  \\
  &=\{w^j_m u^m_x{}_{\lambda+\mu+\d}u^k\}^N(\lambda+\d)^{-1}w^i_nu^n_x\notag
  \\
  &\quad-\{w^i_n u^n_x{}_{\lambda+\mu+\d}u^k\}^N(\mu+\d)^{-1}w^j_mu^m_x
                 \notag
  \\
  &=w^k_mu^m_x(\lambda+\mu+\d)^{-1}w^l_su^s_x\d_lw^j_tu^t_x
                 (\lambda+\d)^{-1}w^i_nu^n_x
                 \notag
  \\
  &\quad-w^k_mu^m_x(\lambda+\mu+\d)^{-1}w^l_su^s_x(\lambda+\mu+\d)
    w^j_l(\lambda+\d)^{-1}w^i_nu^n_t\notag
  \\
  &\quad-w^k_mu^m_x(\lambda+\mu+\d)^{-1}w^l_su^s_x\d_lw^i_tu^t_x
         (\mu+\d)^{-1}w^j_nu^n_x
                 \notag
  \\
  &\quad+w^k_mu^m_x(\lambda+\mu+\d)^{-1}w^l_su^s_x
    (\lambda+\mu+\d)w^i_l(\mu+\d)^{-1}w^j_nu^n_t
    \notag
  \\
  &=w^k_mu^m_x(\lambda+\mu+\d)^{-1}\left(\left(w^l_su^s_x\d_lw^j_tu^t_x
                 +\d_tw^l_su^s_xu^t_xw^j_l+\right.\right.
                 \notag
  \\
&\qquad\left.\left.+w^l_su^s_{xx}w^j_l
       \right)(\lambda+\d)^{-1}w^i_nu^n_x\right)
       \notag
  \\
  &\quad-w^k_mu^m_x(\lambda+\mu+\d)^{-1}\left(\left(w^l_su^s_x\d_lw^i_tu^t_x
                 +\d_tw^l_su^s_xu^t_xw^i_l+\right.\right.
                 \notag
  \\
               &\qquad\left.\left.+w^l_su^s_{xx}w^i_l \right)(\mu+\d)^{-1}w^j_nu^n_x\right)
                 \notag
  \\
               &\quad-w^k_sw^l_tw^j_lu^s_xu^t_x[(\lambda+\d)^{-1}w^i_nu^n_x]
                 \notag
  \\
&\quad+w^k_sw^l_tw^i_lu^s_xu^t_x[(\mu+\d)^{-1}w^j_nu^n_x].
\end{align}
Note that in the last passage we have used the same identity as in Equation
\eqref{eq:LN3} to simplify the terms of the form
$(\lambda+\mu+\d)^{-1} A(\lambda+\mu+\d)B$.

\subsection{Projection onto the basis}\label{ssec:basis}

The Jacobi identity lives in the previously defined space
$V_{\lambda,\mu}$. The partial results of our computation are not all expressed
in such a form that, after the expansion of the nonlocal terms, will produce
elements on the basis $\lambda^p\mu^{d-p}$ and
$(\lambda+\mu)^{-q}\lambda^{d+q}$ for $p\in\mathbb{Z}$, $q>0$ for all
$d\in\mathbb{Z}$.

All the double nonlocal terms in $J^{ijk}_{\lambda,\mu}(N,N)$ cancel out, leaving with a simplified expression
\begin{equation}
\label{eq:JNN}
\begin{split}
  J^{ijk}_{\lambda,\mu}(N,N)&=w^k_mu^m_x(\lambda+\mu+\d)^{-1}\left(w^j_lw^l_sw^i_tu^t_xu^s_x-w^i_lw^l_sw^j_tu^t_x\right)
  \\
  &\quad+[(\mu+\d)^{-1}w^j_nu^n_x]\left(w^k_lw^l_sw^i_tu^s_xu^t_x-w^k_sw^l_tw^i_lu^s_xu^t_x\right)
  \\
&\quad+[(\lambda+\d)^{-1}w^i_nu^n_x]\left(-w^k_lw^l_sw^j_tu^s_xu^t_x+w^k_sw^l_tw^j_lu^s_xu^t_x\right).
\end{split}
\end{equation}

There are terms which would not expand in the chosen basis in the last line of Equation \eqref{eq:NL2}. We have
\begin{align}\label{eq:norm1}
  w^k_m&u^m_x(\lambda+\mu+\d)^{-1}(g^{lj}\d_lw^i_su^s_x\mu)=
         \notag
  \\
  =&w^k_mu^m_x(\lambda+\mu+\d)^{-1}((\lambda+\mu+\d)g^{lj}\d_lw^i_su^s_x
  -(\lambda+\d)g^{lj}\d_lw^i_su^s_x)
  \notag\\
  =&g^{lj}\d_lw^i_sw^k_tu^s_xu^t_x-w^k_mu^m_x(\lambda
  +\mu+\d)^{-1}\big(g^{lj}\d_lw^i_su^s_x\lambda
    +\d_tg^{lj}\d_lw^i_su^s_xu^t_x
\notag \\
&+g^{lj}\d_{tl}w^i_su^s_xu^t_x+g^{lj}\d_lw^i_su^s_{xx}\big)
\end{align}
and
\begin{align}\label{eq:norm2}
  w^k_m&u^m_x(\lambda+\mu+\d)^{-1}w^i_l(\mu+\d)(g^{lj}\mu+\Gamma^{lj}_su^s_x)=
  \notag\\
  =&w^k_mu^m_x(\lambda+\mu+\d)^{-1}(\lambda+\mu+\d)w^i_l\left(g^{lj}\mu
    +\Gamma^{lj}_su^s_x\right)\notag
  \\
&-w^k_mu^m_x(\lambda+\mu+\d)^{-1}\left(\lambda w^i_l(g^{lj}\mu
  +\Gamma^{lj}_su^s_x)+\d_tw^i_lu^t_x(g^{lj}\mu+\Gamma^{lj}_su^s_x)\right)\notag
\\
=&g^{lj}w^i_lw^k_su^s_x\mu+\Gamma^{lj}_sw^k_tu^s_xu^t_x - w^k_mu^m_x(\lambda+\mu
+\d)^{-1}\left(w^i_l\Gamma^{lj}_su^s_z\lambda
  +\d_tw^i_l\Gamma^{lj}_su^s_xu^t_x\right)\notag
\\
&-w^k_mu^m_x(\lambda+\mu+\d)^{-1}\left((\lambda+\mu+\d)(g^{lj}w^i_l\lambda
  +g^{lj}\d_tw^i_lu^t_x)\right.\notag
\\
  &\qquad\left.-(\lambda+\d)(w^i_lg^{lj}\lambda+g^{lj}\d_tw^i_lu^t_x)\right)
         \notag
  \\
  =&g^{lj}w^i_lw^k_su^s_x\mu+\Gamma^{lj}_sw^k_tu^s_xu^t_x-g^{lj}w^i_lw^k_su^s_x\lambda-g^{lj}\d_sw^i_lw^k_tu^s_xu^t_x
     \notag
  \\
       &+w^k_mu^m_x(\lambda+\mu+\d)^{-1}\left(g^{lj}w^i_l\lambda^2+\d_sg^{lj}w^i_lu^s_x\lambda+2g^{lj}\d_sw^i_lu^s_x\lambda\right.
         \notag
  \\
       &\qquad\left.-\Gamma^{lj}_sw^i_lu^s_x\lambda+g^{lj}\d_{st}w^i_lu^s_xu^t_x+\d_sg^{lj}\d_tw^i_lu^s_xu^t_x-\Gamma^{lj}_s\d_tw^i_lu^s_xu^t_x\right.
         \notag
  \\
&\qquad\left.+g^{lj}\d_sw^i_lu^s_{xx}\right).
\end{align}

The full PVA-Jacobi identity can be then be obtained in terms of the previously
computed expressions, provided that we replace the last line in Equation
\eqref{eq:NL2} with the expression \eqref{eq:norm1} $+$ \eqref{eq:norm2}. The
full form of $J^{ijk}_{\lambda,\mu}$ is then
\begin{multline*}
  J^{ijk}_{\lambda,\mu}(P,P)=J^{ijk}_{\lambda,\mu}(L,L)
  +J^{ijk}_{\lambda,\mu}(L,N)+J^{ijk}_{\lambda,\mu}(N,L)+J^{ijk}_{\lambda,\mu}(N,N)
  \\
=\left(\text{\eqref{eq:LL1}} - \text{\eqref{eq:LL2}} -
  \text{\eqref{eq:LL3}}\right)
\\
+\left(\text{\eqref{eq:LN1}} - \text{\eqref{eq:LN2}} -
  \text{\eqref{eq:LN3}}\right)+\left(\text{\eqref{eq:NL1}} -
  \text{\eqref{eq:NL2}} - \text{\eqref{eq:NL3}}\right)
\\
+\left(\text{\eqref{eq:NN1}} - \text{\eqref{eq:NN2}}
  - \text{\eqref{eq:NN3}}\right).
\end{multline*}

\subsection{The conditions}

Assuming the skewsymmetry of the bracket $P$, the PVA-Jacobi equation is
symmetric for cyclic permutations of
$(i,\lambda),(j,\mu),(k,\nu=-\lambda-\mu-\d)$, and it is fulfilled if and only
if all the coefficients in the basis of $V_{\lambda,\mu}$ we have chosen
vanish. We report here the coefficients corresponding to the elements of
Section \ref{sec:conditions-1}, under the condition of skewsymmetry for the
bracket.
\begin{itemize}
\item The coefficient of $\lambda^2$ is
  \begin{equation}
    g^{li}\Gamma^{kj}_l-g^{kl}g^{ji}_{,l}+g^{kl}\Gamma^{ji}_l
  \end{equation}
  whose vanishing, given the skewsymmetry of the bracket, is equivalent to
  $g^{il}\Gamma^{kj}_l=g^{kl}\Gamma^{ij}_l$.
\item The coefficient of $\lambda\mu$ would be
  \begin{equation}
    g^{il}g^{kj}_{,l}-g^{lj}g^{ki}_{,l}-g^{kl}g^{ji}_{,l}+2g^{kl}\Gamma^{ji}_l
  \end{equation}
  which vanishes on account of the previous condition.
\item The coefficient of $\left((\mu+\d)^{-1}w^j_nu^n_x\right)\lambda^2$ is
  \begin{equation}
    g^{il}w^k_l-g^{kl}w^i_l
  \end{equation}
\item The expression that is obtained when $\lambda^0\mu^0=1$ is a differential
  polynomial. The coefficient multiplying $u^s_{xx}$ is
  \begin{equation}
    g^{kl}\left(\d_s\Gamma^{ji}_l-\d_l\Gamma^{ji}_s\right)
    +\Gamma^{kj}_l\Gamma^{li}_s-\Gamma^{ki}_l\Gamma^{lj}_s
    +g^{kl}\left(w^j_lw^i_s-w^i_lw^j_s\right),
  \end{equation}
  where the two summands with $ww$ come from \eqref{eq:NL3}.
\item The expression $\left((\mu+\d)^{-1}w^j_nu^n_x\right)$ is a differential
  polynomial. The coefficient multiplying $u^s_{xx}$ is
  \begin{equation}
    w^k_l\Gamma^{li}_s-w^l_s\Gamma^{ki}_l
    +g^{kl}\left(\d_lw^i_s-\d_sw^i_l\right)
  \end{equation}
  which is equal to Equation \eqref{eq:51} (after the exchange of the free
  indices $(j,m)\leftrightarrow (k,s)$).
\end{itemize}

\section{Concluding remarks}

In this paper we have considered three different approaches to the problem of
verifying Jacobi identity for weakly non local Poisson brackets of hydrodynamic
type, and we have showed their equivalence. While the equivalence between the
approach based on distributions and the approach based on (pseudo)-differential
operators is quite straightforward, the equivalence between the first two
approaches and the approach based on Poisson vertex algebras is more subtle and
requires additional work (Proposition 2 and Theorem 3). The final result is an
algorithmic procedure adapted to the three different formalisms. It is also
clear that the procedure can be used to check the compatibility of two Poisson
brackets, or that the Schouten bracket between two distinct Hamiltonian
operators vanish.

One of the most important consequences of the existence of a well-defined
procedure for the computation of the Jacobi identity is that the procedure can
be easily programmed in a computer algebra system.  We plan to do this in a
future work.

The extension of this work to weakly nonlocal symplectic operators \cite{Mal04}
seems possible along the lines presented in this paper, and might prove to be
useful. More generally, a differential-geometric theory of nonlocal
integrability operators (i.e., Hamiltonian operators, symplectic operators and
recursion operators for symmetries and conserved quantities) for partial
differential equations, including weakly nonlocal operators, already
exist~\cite{KVV18}, but it does not include Schouten brackets, the formulation
of the symplectic property and the formulation of the hereditary property for
recursion operators (the variational Nijenhuis bracket). However, this seems to
be a possible goal and at the moment it is in development, see \cite{KV18} for
latest advances.

\section{Appendix: three different recipes for the Jacobi
  identity}
\label{sec:appendix}

In this section we will show that the expression of the Jacobi identity can be
written in three different ways up to total divergencies.

We recall that the formal adjoint is defined by the equality
\begin{equation}
  \label{eq:53}
  [\langle A(\psi^1_i),\psi^2\rangle -   \langle \psi^1_i,A^*(\psi^2)\rangle]=0,
\end{equation}
where $\langle,\rangle$ is the pairing between vectors and covectors and square
brackets mean that the result is an equivalence class up to total
divergencies. In what follows we will need the standard facts \cite{Many,IVV}:
\begin{equation}
  \label{eq:25}
  \ell_{\Delta(\psi)}(\varphi) = \ell_{\Delta,\psi}(\varphi)
  + \Delta\circ\ell_{\psi}(\varphi),
\end{equation}
and
\begin{equation}
  \label{eq:24}
  \mathcal{E}(\langle\psi,\varphi\rangle) = \ell^*_\psi(\varphi) +
  \ell^*_\varphi(\psi),
\end{equation}
where $\mathcal{E}$ is the Euler--Lagrange operator.
If $P^*=-P$, then it is easy to prove \cite{IVV} that
\begin{equation}
  \label{eq:232}
  \ell^*_{P,\psi^1}(\psi^2) = \ell^*_{P^*,\psi^2}(\psi^1) =
  - \ell^*_{P,\psi^2}(\psi^1).
\end{equation}

\begin{theorem} Let $P$, $Q$ be skew-adjoint variational bivectors. Then,
  the following formulae for the Schouten bracket coincide up to total
  divergencies:
  \begin{align}
  \label{eq:22-b}
  \begin{split}
    [P,Q](\psi^1,\psi^2,\psi^3) = &
    \langle\ell_{P,\psi^1}(Q(\psi^2)),\psi^3\rangle +
    \text{cyclic}(\psi^1,\psi^2,\psi^3)
    \\
    & + \langle\ell_{Q,\psi^1}(P(\psi^2)),\psi^3\rangle +
    \text{cyclic}(\psi^1,\psi^2,\psi^3)
  \end{split}
  \\
  \label{eq:18}
  \begin{split}
    [P,Q](\psi^1,\psi^2,\psi^3) = &
    \langle\ell_{P,\psi^1}(Q(\psi^2)),\psi^3\rangle
    - \langle\ell_{P,\psi^2}(Q(\psi^1)),\psi^3\rangle
    \\
    &+ \langle\ell_{Q,\psi^1}(P(\psi^2)),\psi^3\rangle
    - \langle\ell_{Q,\psi^2}(P(\psi^1)),\psi^3\rangle
    \\
    & - \langle P(\ell^*_{Q,\psi^2}(\psi^1)),\psi^3\rangle
    - \langle Q(\ell^*_{P,\psi^2}(\psi^1)),\psi^3\rangle
  \end{split}
    \\
  \label{eq:21-b}
  \begin{split}
    [P,Q](\psi^1,\psi^2,\psi^3) = &
    \langle P(\mathcal{E}(\langle Q(\psi^1),\psi^2\rangle),\psi^3\rangle +
    \text{cyclic}(\psi^1,\psi^2,\psi^3)
    \\
    & + \langle Q(\mathcal{E}(\langle P(\psi^1),\psi^2\rangle),\psi^3\rangle +
    \text{cyclic}(\psi^1,\psi^2,\psi^3)
  \end{split}
\end{align}
\end{theorem}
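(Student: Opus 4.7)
The strategy is to take \eqref{eq:22-b} as the reference form and show that both \eqref{eq:18} and \eqref{eq:21-b} reduce to it modulo total divergences, using skew-adjointness of $P$ and $Q$ together with the general identities \eqref{eq:25}, \eqref{eq:24}, and \eqref{eq:232}. The first step in the plan is to establish a single auxiliary antisymmetry that does most of the work.

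The key intermediate identity I would establish is
\begin{equation*}
\langle \ell_{P,\psi^a}(Q(\psi^b)),\psi^c\rangle \equiv -\langle \ell_{P,\psi^c}(Q(\psi^b)),\psi^a\rangle
\end{equation*}
modulo total divergences, valid whenever both $P$ and $Q$ are skew-adjoint. This follows from a four-step chain: move $\ell_{P,\psi^a}$ across the pairing via \eqref{eq:53} to obtain $\langle Q(\psi^b),\ell^*_{P,\psi^a}(\psi^c)\rangle$; swap the first and third arguments of the linearization using \eqref{eq:232}, picking up a sign; push $Q$ across the pairing and use $Q^*=-Q$; finally move $\ell^*_{P,\psi^c}$ back across and use $Q^*=-Q$ once more.

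For the equivalence of \eqref{eq:22-b} and \eqref{eq:18}, I would examine the $P$-cyclic sum in \eqref{eq:22-b}. The second summand $\langle \ell_{P,\psi^2}(Q(\psi^3)),\psi^1\rangle$ rewrites as $-\langle Q(\ell^*_{P,\psi^2}(\psi^1)),\psi^3\rangle$ by moving $\ell_{P,\psi^2}$ across the pair and then $Q$ across, using $Q^*=-Q$; the third summand $\langle \ell_{P,\psi^3}(Q(\psi^1)),\psi^2\rangle$ becomes $-\langle \ell_{P,\psi^2}(Q(\psi^1)),\psi^3\rangle$ by a single application of the key antisymmetry. Together with the first (unchanged) summand, this reproduces exactly the three $P$-type terms of \eqref{eq:18}; the $Q$-cyclic half is handled by interchanging the roles of $P$ and $Q$.

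For \eqref{eq:22-b} = \eqref{eq:21-b}, I would first apply \eqref{eq:24} to the scalar $\langle Q(\psi^1),\psi^2\rangle$. Since the test vector $\psi^2$ does not depend on the field variables, $\ell_{\psi^2}=0$; and by \eqref{eq:25} applied with $\Delta=Q$ and $\psi=\psi^1$, combined with $\ell_{\psi^1}=0$, we obtain $\ell_{Q(\psi^1)}=\ell_{Q,\psi^1}$. Hence $\mathcal{E}(\langle Q(\psi^1),\psi^2\rangle)=\ell^*_{Q,\psi^1}(\psi^2)$. Substituting this into a typical summand of \eqref{eq:21-b} and moving $P$ across the pairing (picking up a sign from $P^*=-P$), then moving $\ell_{Q,\psi^1}$ back across via its adjoint, and finally applying the key antisymmetry, yields $\langle \ell_{Q,\psi^2}(P(\psi^3)),\psi^1\rangle$; summing over cyclic permutations reproduces the $Q$-cyclic half of \eqref{eq:22-b}, and the $P\leftrightarrow Q$ block follows by the same argument. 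The main obstacle is sign bookkeeping: every manipulation is valid only modulo total divergences, and it is easy to lose a sign when composing adjointness moves with \eqref{eq:232} or with $P^*=-P$. Tracking the three arguments $\psi^1,\psi^2,\psi^3$, which play asymmetric roles in \eqref{eq:18} but cyclically symmetric ones in \eqref{eq:22-b} and \eqref{eq:21-b}, is where the proof consumes most of its care; no additional ideas beyond the key antisymmetry above are needed.
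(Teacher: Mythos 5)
Your proof is correct, and for the equivalence of \eqref{eq:22-b} and \eqref{eq:18} it is essentially the paper's own argument: the ``key antisymmetry'' $\langle\ell_{P,\psi^a}(Q(\psi^b)),\psi^c\rangle\equiv-\langle\ell_{P,\psi^c}(Q(\psi^b)),\psi^a\rangle$ is exactly the chain the paper writes out (adjoint move, \eqref{eq:232}, adjoint move back), and your rewriting of the second and third cyclic summands reproduces the paper's two displayed identities. Note only that your ``four-step chain'' is slightly redundant: the two invocations of $Q^*=-Q$ cancel each other, and the antisymmetry needs nothing beyond $P^*=-P$ via \eqref{eq:232}. The genuine difference is in the treatment of \eqref{eq:21-b}. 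You assume the arguments are field-independent, so that $\ell_{\psi^i}=0$ and hence $\mathcal{E}(\langle Q(\psi^1),\psi^2\rangle)=\ell^*_{Q,\psi^1}(\psi^2)$ on the nose; the paper instead keeps the extra contributions $\langle Q(\psi^i),\ell_{\psi^j}(P(\psi^k))\rangle$ coming from \eqref{eq:24} and \eqref{eq:25}, shows that they cancel in the cyclic sum when the computation is restricted to $\psi=\mathcal{E}(F)$ (where $\ell_\psi=\ell^*_\psi$, which is weaker than $\ell_\psi=0$), and then invokes the principle that multivector identities proved on the image of $\mathcal{E}$ hold in general. Your shortcut buys a much shorter computation, but it silently relies on the same closing principle -- that verifying the identity on a restricted class of arguments determines the trivector -- and this deserves to be stated explicitly, because formula \eqref{eq:21-b} is precisely the one meant to be applied to field-dependent arguments $\psi^i=\mathcal{E}(F_i)$ (as in Ferapontov's usage cited in the Remark), for which the terms you discard are not individually zero. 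With that one sentence added, your argument is a complete and slightly more economical proof; without it, the $\eqref{eq:21-b}$ part only establishes the identity for arguments on which the distinctive $\mathcal{E}$-term of \eqref{eq:21-b} trivializes.
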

\begin{proof}
The equivalence between \eqref{eq:22-b} and \eqref{eq:18} is given by the
following formulae (all equalities are up to total divergencies!):
\allowdisplaybreaks
\begin{align*}
  \langle\ell_{P,\psi^2}(Q(\psi^1)),\psi^3\rangle =&
       \langle Q(\psi^1),\ell^*_{P,\psi^2}(\psi^3)\rangle
  \\
  = &  \langle Q(\psi^1), \ell^*_{P^*,\psi^3}(\psi^2)\rangle
  \\
  = &  - \langle Q(\psi^1), \ell^*_{P,\psi^3}(\psi^2)\rangle
  \\
  = &  - \langle \ell_{P,\psi^3}(Q(\psi^1)), \psi^2\rangle
  \\
  - \langle Q(\ell^*_{P,\psi^2}(\psi^1)),\psi^3\rangle =&
     - \langle \ell^*_{P,\psi^2}(\psi^1),Q^*(\psi^3)\rangle
  \\
  =& \langle \ell^*_{P,\psi^2}(\psi^1),Q(\psi^3)\rangle
  \\
  =& \langle \psi^1,\ell_{P,\psi^2}(Q(\psi^3))\rangle
\end{align*}

The equivalence between \eqref{eq:22-b} and \eqref{eq:21-b} is given by the
following formulae:
\begin{align*}
  \langle P(\mathcal{E}(&Q(\psi^1)(\psi^2)),\psi^3\rangle = 
  - \langle \mathcal{E}(Q(\psi^1)(\psi^2)),P(\psi^3)\rangle
  \\
  =&  - \langle \ell^*_{Q(\psi^1)}(\psi^2)) + \ell^*_{\psi^2}(Q(\psi^1)),
      P(\psi^3)\rangle
  \\
  = & - \langle\psi^2,\ell_{Q,\psi^1}(P(\psi^3))\rangle
    - \langle\psi^2,Q(\ell_{\psi^1}(P(\psi^3)))\rangle
      + \langle\psi^1,Q(\ell_{\psi^2}(P(\psi^3)))\rangle
  \\
  = & \langle\psi^1,\ell_{Q,\psi^2}(P(\psi^3))\rangle
     - \langle\psi^2,Q(\ell_{\psi^1}(P(\psi^3)))\rangle
      + \langle\psi^1,Q(\ell_{\psi^2}(P(\psi^3)))\rangle
  \\
  = & \langle\psi^1,\ell_{Q,\psi^2}(P(\psi^3))\rangle
  +  \langle Q(\psi^2),\ell_{\psi^1}(P(\psi^3))\rangle
      - \langle Q(\psi^1),\ell_{\psi^2}(P(\psi^3))\rangle
\end{align*}
In the cyclic sum in \eqref{eq:21-b} all terms of the form
$ \langle Q(\psi^i),\ell_{\psi^j}(P(\psi^k))$ cancel if the computation is
restricted to covector-valued densities $\psi$ that lie in the image of the
Euler--Lagrange operator $\mathcal{E}$: $\psi = \mathcal{E}(F)$, where
$F = \int f\, dx$. In that case, we have $\ell_\psi = \ell^*_\psi$. The proof is
completed by the remark that multivector identities hold true in general even
if they are proved on the image of $\mathcal{E}$ only \cite{IVV}.
\end{proof}

\begin{remark}
  The expression~\eqref{eq:21-b} was used in \cite{Fer91b} in order to check
  the Jacobi identity. The expressions~\eqref{eq:22-b} and \eqref{eq:18}
  are more commonly used (see e.g. \cite{Many,Dorf,Olv93}). In particular, the
  expression~\eqref{eq:22-b} is the formula that we use throughout this
  paper. The three expressions do not exhaust all possibilities; see the above
  references for more exotic expressions of the Jacobi identity.
\end{remark}

\end{document}